\begin{document}

\title{Ad Hoc Networking With Cost-Effective Infrastructure: Generalized Capacity Scaling}
\author{\large Cheol Jeong, \emph{Member}, \emph{IEEE} and Won-Yong Shin, \emph{Member}, \emph{IEEE} \\
\thanks{This work was supported by Basic Science Research Program through the National Research Foundation of Korea (NRF) funded by the Ministry of Science, ICT \& Future Planning (MSIP) (2012R1A1A1044151). This paper was presented in part at the 2014 IEEE International Symposium on Information Theory, Honolulu, HI, June/July 2014.}
\thanks{C. Jeong is with the DMC R\&D Center, Samsung Electronics, Suwon 443-742, Republic of Korea (E-mail: cheol.jeong@ieee.org).}
\thanks{W.-Y. Shin is with the Department of Computer Science and
Engineering, Dankook University, Yongin 448-701, Republic of Korea
(E-mail: wyshin@dankook.ac.kr).}
} \maketitle




\newtheorem{definition}{Definition}
\newtheorem{theorem}{Theorem}
\newtheorem{lemma}{Lemma}
\newtheorem{example}{Example}
\newtheorem{corollary}{Corollary}
\newtheorem{proposition}{Proposition}
\newtheorem{conjecture}{Conjecture}
\newtheorem{remark}{Remark}

\def \diag{\operatornamewithlimits{diag}}
\def \min{\operatornamewithlimits{min}}
\def \max{\operatornamewithlimits{max}}
\def \log{\operatorname{log}}
\def \max{\operatorname{max}}
\def \rank{\operatorname{rank}}
\def \out{\operatorname{out}}
\def \exp{\operatorname{exp}}
\def \arg{\operatorname{arg}}
\def \E{\operatorname{E}}
\def \tr{\operatorname{tr}}
\def \SNR{\operatorname{SNR}}
\def \dB{\operatorname{dB}}
\def \ln{\operatorname{ln}}

\def \bmat{ \begin{bmatrix} }
\def \emat{ \end{bmatrix} }

\def \be {\begin{eqnarray}}
\def \ee {\end{eqnarray}}
\def \ben {\begin{eqnarray*}}
\def \een {\end{eqnarray*}}

\begin{abstract}
Capacity scaling of a large hybrid network with unit node density,
consisting of $n$ wireless {\em ad hoc} nodes, base stations (BSs)
equipped with multiple antennas, and one remote central processor
(RCP), is analyzed when wired backhaul links between the BSs and the
RCP are {\em rate-limited}. We deal with a general scenario where
the number of BSs, the number of antennas at each BS, and the
backhaul link rate can scale at arbitrary rates relative to $n$
(i.e., we introduce three scaling parameters). We first derive the
minimum backhaul link rate required to achieve the same capacity
scaling law as in the infinite-capacity backhaul link case. Assuming
an arbitrary rate scaling of each backhaul link, a generalized
achievable throughput scaling law is then analyzed in the network
based on using one of pure multihop, hierarchical cooperation, and
two infrastructure-supported routing protocols, and moreover,
three-dimensional information-theoretic operating regimes are
explicitly identified according to the three scaling parameters. In
particular, we show the case where our network having a power
limitation is also fundamentally in the degrees-of-freedom- or {\em
infrastructure-limited} regime, or both. In addition, a generalized
cut-set upper bound under the network model is derived by cutting
not only the wireless connections but also the wired connections. It
is shown that our upper bound matches the achievable throughput
scaling even under realistic network conditions such that each
backhaul link rate scales slower than the aforementioned
minimum-required backhaul link rate.
\end{abstract}

\begin{keywords}
Achievability, backhaul link, base station (BS), cut-set upper
bound, general capacity scaling, hybrid network, infrastructure,
remote central processor (RCP).
\end{keywords}

\newpage

\section{Introduction}



In the era of the internet of things (IoT), referred to as a world
of massive devices equipped with sensors connected to the internet,
machine-to-machine (M2M) communications play an important role as an
emerging technology. As a large number of M2M devices participate in
communications, the protocol design efficiently delivering a number
of packets becomes more crucial. While numerical results via
computer simulations depend heavily on specific operating parameters
for a given protocol, a study on the capacity scaling of a
large-scale network provides a fundamental limit on the network
throughput and an asymptotic trend with respect to the number of
nodes. Hence, one can obtain insights on the practical design of a
protocol by characterizing the capacity scaling.

\subsection{Previous Work}

Gupta and Kumar's pioneering work~\cite{GuptaKumar:00} introduced
and characterized the sum throughput scaling law in a large wireless
{\it ad hoc} network. For the network having $n$ nodes randomly
distributed in a unit area, it was shown in~\cite{GuptaKumar:00}
that the total throughput scales as $\Theta(\sqrt{n/\log
n})$.\footnote{We use the following notation: i) $f(x)=O(g(x))$
means that there exist constants $C$ and $c$ such that $f(x)\leq
Cg(x)$ for all $x>c$, ii) $f(x)=o(g(x))$ means that
$\lim_{x\rightarrow \infty}\frac{f(x)}{g(x)}=0$, iii)
$f(x)=\Omega(g(x))$ if $g(x)=O(f(x))$, iv) $f(x)=w(g(x))$ if
$g(x)=o(f(x))$, and v) $f(x)=\Theta(g(x))$ if $f(x)=O(g(x))$ and
$g(x)=O(f(x))$ \cite{D.Knuth:76}.} This throughput scaling is
achieved by the nearest-neighbor multihop (MH) routing strategy.
In~\cite{FranceschettiDouseTseThiran:07,ShinChungLee:TIT13,ElGamalMammenPrabhakarShah:06},
MH schemes were further developed and analyzed in the network, while
their average throughput per source--destination (S--D) pair scales
far slower than $\Theta(1)$---the total throughput scaling was
improved to $\Theta(\sqrt{n})$ by using percolation
theory~\cite{FranceschettiDouseTseThiran:07}; the effect of
multipath fading channels on the throughput scaling was studied
in~\cite{ShinChungLee:TIT13}; and the trade-off between power and
delay was examined in terms of scaling laws
in~\cite{ElGamalMammenPrabhakarShah:06}. Together with the studies
on MH, it was shown that a hierarchical cooperation (HC)
strategy~\cite{OzgurLevequeTse:07,NiesenGuptaShah:10} achieves an
almost linear throughput scaling, i.e., $\Theta(n^{1-\epsilon})$ for
an arbitrarily small $\epsilon>0$, in the dense network of unit
area. As alternative approaches to achieving a linear scaling, novel
techniques such as networks with node
mobility~\cite{GrossglauserTse:02}, interference
alignment~\cite{CadambeJafar:08,Niesen:IT11}, directional
antennas~\cite{LiZhangFang:TMC11}, and infrastructure
support~\cite{ZemlianovVeciana:05} have also been proposed. Capacity
scaling was studied when the node density over the area is
inhomogeneous~\cite{GarettoGiacconeLeonardi:TN09-1,GarettoGiacconeLeonardi:TN09-2,AlfanoGarettoLeonardiMartina:TN10}.
In~\cite{YinGaoCui:TN10,HuangWang:TN12}, a cognitive network
consisting of primary and secondary networks was considered for
studying the scaling law.

Especially, since long delay and high cost of channel estimation are
needed in ad hoc networks with only wireless connectivity, the
interest in study of more amenable networks using infrastructure
support has greatly been growing. Such hybrid networks consisting of
both wireless ad hoc nodes and infrastructure nodes, or equivalently
base stations (BSs), have been introduced and analyzed
in~\cite{O.Dousse:INFOCOM02,KozatTassiulas:03,ZemlianovVeciana:05,LiuThiranTowsley:07,ShinJeonDevroyeVuChungLeeTarokh:08}.
Using very high frequency bands can be thought of as one of
promising ways to meet the high throughput requirements for the next
generation communications since abundant frequency resources are
available in these bands. A very small wavelength due to the very
high frequency bands enables us to deploy a vast number of antennas
at each BS (i.e., large-scale multi-antenna
systems~\cite{GuthyUtschickHonig:JSAC13}). In a hybrid network where
each BS is equipped with a large number of antennas, the optimal
capacity scaling was characterized by introducing two new routing
protocols, termed infrastructure-supported single-hop (ISH) and
infrastructure-supported multihop (IMH)
protocols~\cite{ShinJeonDevroyeVuChungLeeTarokh:08}. In the ISH
protocol, all wireless source nodes in each cell communicate with
its belonging BS using either a single-hop multiple-access or a
single-hop broadcast. In the IMH protocol, source nodes in each cell
communicate with its belonging BS via the nearest-neighbor MH
routing.

In hybrid networks with ideal
infrastructure~\cite{O.Dousse:INFOCOM02,KozatTassiulas:03,ZemlianovVeciana:05,LiuThiranTowsley:07,ShinJeonDevroyeVuChungLeeTarokh:08},
BSs have been assumed to be interconnected by infinite-capacity
wired links. In large-scale ad hoc networks, it is not
cost-effective to assume a long-distance optical fiber for all
BS-to-BS backhaul links. In practice, it is rather meaningful to
consider a cost-effective finite-rate backhaul link between BSs. One
natural question is what are the fundamental capabilities of hybrid
networks with {\it rate-limited} backhaul links in supporting $n$
nodes that wish to communicate concurrently with each other. To in
part answer this question, the throughput scaling was studied
in~\cite{C.Capar:11,C.Capar:12} for a simplified hybrid network,
where BSs are connected only to their neighboring BSs via a
finite-rate backhaul link---lower and upper bounds on the throughput
were derived in one- and two-dimensional networks. However,
in~\cite{C.Capar:11,C.Capar:12}, the system model under
consideration is comparatively simplified, and the form of
achievable schemes is limited only to MH routings.
In~\cite{JeongShin:ISIT13}, a general hybrid network deploying
multi-antenna BSs was studied in fundamentally analyzing how much
rate per BS-to-BS link is required to guarantee the optimal capacity
scaling achieved for the infinite-capacity backhaul link
scenario~\cite{ShinJeonDevroyeVuChungLeeTarokh:08}.

More practically, packets arrived at a certain BS in a radio access
network (RAN) are delivered to a core network (CN), and then are
transmitted from the CN to other BSs in the RAN, while neighboring
BSs have an interface through which only signaling information is
exchanged between them~\cite{S.Sesia:11}. The cellular network
operating based on a remote central processor (RCP) to which all BSs
are connected is well suited to this realistic
scenario~\cite{A.Sanderovich:ISIT07,P.Marsch:ICC07,S.Shamai:PIMRC08,B.Nazer:ISIT09,A.Sanderovich:TIT09}.
In~\cite{A.Sanderovich:TIT09}, the set of BSs connected to the RCP
via limited-capacity backhaul links was adopted in studying the
performance of the multi-cell processing in cooperative cellular
systems using Wyner-type models, where an achievable rate for the
uplink channel of such a cellular model was analyzed. To the best of
our knowledge, characterizing an information-theoretic capacity
scaling law of large hybrid networks (i.e. more general than the
Wyner-type model) with finite-capacity backhaul links in the
presence of the RCP has never been conducted before in the
literature.

\subsection{Contribution}

In this paper, we introduce a more general hybrid network with unit
node density (i.e., a hybrid extended network), consisting of $n$
wireless ad hoc nodes, multiple BSs equipped with multiple antennas,
and one RCP, in which wired backhaul links between the BSs and the
RCP are {\em rate-limited}. Our network model is well-suited for the
cloud-RAN that has recently received a great deal of attention by
assuming that some functionalities of the BSs are moved to a central
unit~\cite{ParkSimeoneSahinShamai:TSP13}. We take into account a
general scenario where three scaling parameters of importance
including i) the number of BSs, ii) the number of antennas at each
BS, and iii) each backhaul link rate can scale at arbitrary rates
relative to $n$. We first derive the minimum rate of a BS-to-RCP
link (or equivalently, an RCP-to-BS link) required to achieve the
same capacity scaling law as in the hybrid network with
infinite-capacity infrastructure. By showing that hybrid networks
can work optimally even with a cost-effective finite-rate backhaul
link (but not with the infinite-capacity backhaul link), we can
provide a vital guideline to design a {\em wireless} backhaul that
is an important component constituting future fifth generation (5G)
networks. Assuming an arbitrary rate scaling of each backhaul link,
we then analyze a new achievable throughput scaling law. Inspired by
the achievability result
in~\cite{ShinJeonDevroyeVuChungLeeTarokh:08}, for our achievable
scheme, we use one of pure MH, HC, and two different
infrastructure-supported routing protocols. Moreover, we identify
{\em three-dimensional} information-theoretic operating regimes
explicitly according to the aforementioned three scaling parameters.
In each operating regime, the best achievable scheme and its
throughput scaling results are shown. Besides the fact that extended
networks of unit node density are fundamentally
power-limited~\cite{OzgurJohariTseLeveque:10}, we are interested in
further finding the case for which our network, having a power
limitation, is either in the degrees-of-freedom (DoF)-limited
regime, where the performance is limited by the number of BSs or the
number of antennas per BS, or in the {\em infrastructure-limited}
regime, where the performance is limited by the rate of backhaul
links, or in both. Especially, studying the infrastructure-limited
regime would be very challenging when we want to efficiently design
the wireless backhaul given the number of BSs and the number of
antennas per BS. The infrastructure-limited regime in hybrid
networks has never been identified before in the literature. We thus
characterize these qualitatively different regimes according to the
three scaling parameters.

In addition, a generalized upper bound on the capacity scaling is
derived for our hybrid network with finite-capacity infrastructure
based on the cut-set theorem. In order to obtain a tight upper bound
on the aggregate capacity, we consider two different cuts under the
network model. The first cut divides the network area into two
halves by cutting the wireless connections. An interesting case is
the use of a new cut (i.e., the second cut), which divides the
network area into another halves by cutting not only the wireless
connections but also the wired connections. It is shown that our
upper bound matches the achievable throughput scaling for an
arbitrary rate scaling of the backhaul link rate. This indicates
that using one of the four routing protocols (i.e., pure MH, HC,
ISH, and IMH protocols) can achieve the optimal capacity scaling
even in the extended hybrid network with {\em rate-limited}
infrastructure, while the fundamental operating regimes are
identified accordingly depending on the backhaul link rate. Hence,
it turns out to be order-optimal for all three-dimensional operating
regimes of our network.

Our main contribution can be summarized as follows.
\begin{itemize}
\item The derivation of the minimum backhaul link rate required to achieve the same capacity scaling law as in the infinite-capacity backhaul link case.
\item The analysis of a generalized achievable throughput scaling law assuming an arbitrary rate scaling of each backhaul link.
\item The explicit identification of three-dimensional operating regimes according to the number of BSs, the number of antennas at each BS, and the backhaul link rate; and the characterization of both DoF- and infrastructure-limited regimes for our hybrid network which is fundamentally power-limited.
\item The derivation of a generalized cut-set upper bound for an arbitrary rate scaling of each backhaul link, resulting in the order optimality of the assumed network.
\end{itemize}

\subsection{Organization}

The rest of this paper is organized as follows. The system and
channel models are described in Section~\ref{SEC:System}. In
Section~\ref{SEC:Review}, the routing protocols with and without
infrastructure support are presented and their transmission rates
are shown. In Section~\ref{SEC:Routing}, the minimum required
backhaul link rate is derived, and then a generalized achievable
throughput scaling is analyzed. A generalized cut-set upper bound on
the capacity scaling is derived in
Section~\ref{SEC:CutSetUpperBound}. Finally,
Section~\ref{SEC:Conclusion} summarizes our paper with some
concluding remarks.

\subsection{Notations}

Throughout this paper, bold upper and lower case letters denote
matrices and vectors, respectively. The superscripts $T$ and
$\dagger$ denote the transpose and conjugate transpose,
respectively, of a matrix (or a vector). The matrix $\mathbf{I}_N$
is an $N\times N$ identity matrix, $[\cdot]_{ki}$ is the $(k,i)$th
element of a matrix, and $E[\cdot]$ is the expectation. The positive
semidefinite matrix $\mathbf{A}$ is denoted by $\mathbf{A}\geq 0$.




\section{System and Channel Models} \label{SEC:System}
In an extended network of unit node density, $n$ nodes are uniformly
and independently distributed on a square of area $n$, except for
the area where BSs are placed.\footnote{It was shown that the HC
scheme is order-optimal in a dense network with
infrastructure~\cite{ShinJeonDevroyeVuChungLeeTarokh:08}. In other
words, the use of infrastructure is not required in a dense network.
Hence, the dense network model is out of interest in our paper where
the rate-limited infrastructure is considered.} We randomly pick
S--D pairings, so that each node acts as a source and has exactly
one corresponding destination node. Assume that the BSs are neither
sources nor destinations. As illustrated in
Fig.~\ref{Fig:RCPnetwork}, the network is divided into $m$ square
cells of equal area, where a BS with $l$ antennas is located at the
center of each cell. The total number of BS antennas in the network
is assumed to scale at most linearly with $n$, i.e., $ml=O(n)$. Note
that this assumption about the antenna scaling is feasible due to
the massive multiple-input multiple-output (MIMO) (or large-scale
MIMO) technology where each BS is equipped with a very large number
of antennas, which has recently received a lot of attention. For
analytical convenience, the parameters $n$, $m$, and $l$ are related
according to
\begin{align}
 n=m^{1/\beta}=l^{1/\gamma}, \nonumber
\end{align}
where $\beta,\gamma\in[0,1)$ with a constraint $\beta+\gamma \leq
1$. This constraint is reasonable since the total number of antennas
of all BSs in the network does not need to be larger than the number
of nodes in the network.

As depicted in Fig.~\ref{Fig:RCPnetwork}, it is assumed that all the
BSs are fully interconnected by wired links through one RCP.  For
simplicity, it is assumed that the RCP is located at the center of
the network. The packets transmitted from BSs are received at the
RCP and are then conveyed to the corresponding BSs. In the previous
works~\cite{O.Dousse:INFOCOM02,KozatTassiulas:03,ZemlianovVeciana:05,LiuThiranTowsley:07,ShinJeonDevroyeVuChungLeeTarokh:08},
the rate of backhaul links has been assumed to be unlimited so that
the links are not a bottleneck when packets are transmitted from one
cell to another. In practice, however, it is natural for each
backhaul link to have a finite capacity that may limit the
transmission rate of infrastructure-supported routing protocols. In
this paper, we assume that each BS is connected to one RCP through
an errorless wired link with {\em finite rate}
$R_{\textrm{BS}}=n^{\eta}$ for $\eta\in (-\infty,\infty)$. It is
also assumed that the BS-to-RCP or RCP-to-BS link is not affected by
interference.

\begin{figure}[t!]
  \centering
  \leavevmode \epsfxsize=4.3in
  \epsffile{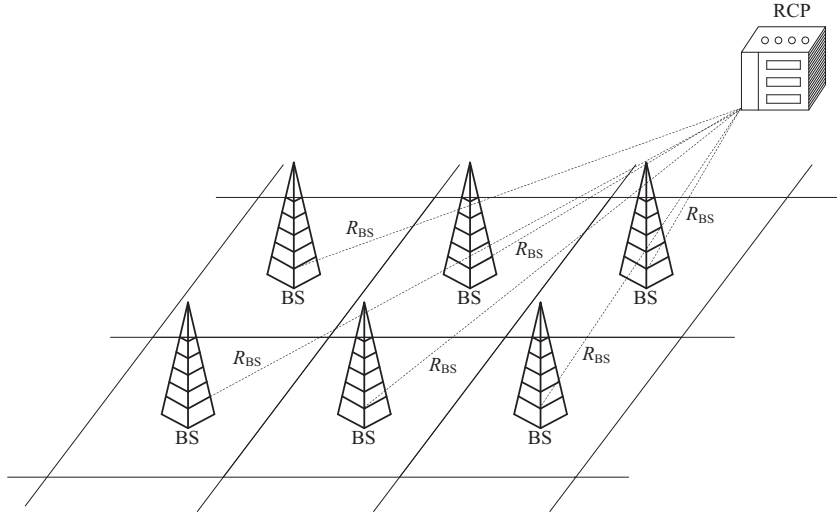}
  \caption{The hybrid network with limited backhaul link rate $R_{\textrm{BS}}$ between a BS and an RCP.}
  \label{Fig:RCPnetwork}
\end{figure}

The uplink channel vector between node $i$ and BS $b$ is denoted by
\begin{align} \label{EQ:uplinkCH}
    \mathbf{h}_{bi}^{(u)}=\left[\frac{e^{j\theta_{bi,1}^{(u)}}}{r_{bi,1}^{(u)\alpha/2}},
\frac{e^{j\theta_{bi,2}^{(u)}}}{r_{bi,2}^{(u)\alpha/2}},\ldots,
\frac{e^{j\theta_{bi,l}^{(u)}}}{r_{bi,l}^{(u)\alpha/2}}\right]^T,
\end{align}
where $\theta_{bi,t}^{(u)}$ represents the random phases uniformly
distributed over $[0,2\pi)$ based on a far-field assumption, which
is valid if the wavelength is sufficiently
small~\cite{FranceschettiDouseTseThiran:07,ShinJeonDevroyeVuChungLeeTarokh:08}.
Here, $r_{bi,t}^{(u)}$ denotes the distance between node $i$ and the
$t$th antenna of BS $b$, and $\alpha>2$ denotes the path-loss
exponent. The downlink channel vector between BS $b$ and node $i$ is
similarly denoted by
\begin{align} \label{EQ:downlinkCH}
\mathbf{h}_{ib}^{(d)}=\left[\frac{e^{j\theta_{ib,1}^{(d)}}}{r_{ib,1}^{(d)\alpha/2}},
\frac{e^{j\theta_{ib,2}^{(d)}}}{r_{ib,2}^{(d)\alpha/2}},\ldots,
\frac{e^{j\theta_{ib,l}^{(d)}}}{r_{ib,l}^{(d)\alpha/2}}\right].
\end{align}
The channel between nodes $i$ and $k$ is given by
\begin{align} \label{EQ:nodeCH}
h_{ki}=\frac{e^{j\theta_{ki}}}{r_{ki}^{\alpha/2}}.
\end{align}


For the uplink-downlink balance, it is assumed that each BS
satisfies an average transmit power constraint $nP/m$, while each
node satisfies an average transmit power constraint $P$. Then, the
total transmit power of all BSs is the same as the total transmit
power consumed by all wireless nodes. This assumption is based on
the same argument as duality connection between multiple access
channel (MAC) and broadcast channel (BC) in~\cite{ViswanathTse:03}.
The antenna configuration basically follows that
of~\cite{ShinJeonDevroyeVuChungLeeTarokh:08}. It is assumed that the
antennas of a BS are placed as follows:
\begin{enumerate}
  \item If $l=w(\sqrt{n/m})$ and $l=O(n/m)$, then $\sqrt{n/m}$ antennas are regularly placed on the BS boundary and the remaining antennas are uniformly placed inside the boundary.
  \item If $l=O(\sqrt{n/m})$, then $l$ antennas are regularly placed on the BS boundary.
\end{enumerate}
Such an antenna deployment guarantees both the nearest neighbor
transmission around the BS boundary and the enough spacing between
the antennas of each BS. This antenna configuration was adopted
in~\cite{JeongShin:ISIT13,GomezRanganErkip:ISIT14}. Let $R_n$ denote
the average transmission rate of each source. The total throughput
of the network is then defined as
$T_n(\alpha,\beta,\gamma,\eta)=nR_n$ and its scaling exponent is
given by\footnote{To simplify notations,
$T_n(\alpha,\beta,\gamma,\eta)$ and $e(\alpha,\beta,\gamma,\eta)$
will be written as $T_n$ and $e$, respectively, if dropping
$\alpha$, $\beta$, $\gamma$, and $\eta$ does not cause any
confusion.}
\begin{align}
    e(\alpha,\beta,\gamma,\eta)=\lim_{n\rightarrow\infty}\frac{\log T_n(\alpha,\beta,\gamma,\eta)}{\log n}. \nonumber
\end{align}

\section{Routing Protocols With and Without Infrastructure Support} \label{SEC:Review}
In this section, routing protocols with and without infrastructure
support are illuminated by describing each protocol in detail and
showing its achievable transmission rates in each cell.

\subsection{Routing Protocols With Infrastructure Support}

The routing protocols supported by BSs having multiple antennas
in~\cite{ShinJeonDevroyeVuChungLeeTarokh:08} are described with some
modification. In infrastructure-supported routing protocols, the
packet of a source is delivered to the corresponding destination of
the source using three stages: {\em access routing}, {\em backhaul
transmission}, and {\em exit routing}. In the access routing, the
packet of a source is transmitted to the home-cell BS. The packet
decoded at the home-cell BS is then transmitted to the target-cell
BS that is the nearest to the destination of the source via backhaul
links (i.e., both BS-to-RCP and RCP-to-BS links). In the exit
routing, the target-cell BS transmits the received packet to the
destination of the source. Let us start from the following lemma,
which quantifies the number of nodes in each cell.

\begin{lemma}\label{Lem:NumberNodesCell}
For $m<n$, the number of nodes in each cell is between
\begin{align}
    \left((1-\delta_0)\frac{n}{m},(1+\delta_0)\frac{n}{m}\right),
    \nonumber
\end{align}
i.e., $\Theta(n/m)$, with probability larger than $1-m
e^{-\Delta(\delta_0)n/m}$, where $\Delta(\delta_0)=(1+\delta_0)\ln
(1+\delta_0) -\delta_0$ for $0<\delta_0<1$ independent of $n$.
\end{lemma}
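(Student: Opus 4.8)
The plan is to model the number of nodes in a fixed cell as a binomial random variable, control its deviation from the mean with a multiplicative Chernoff bound, and then pass from one cell to all $m$ cells by a union bound. The exponent $\Delta(\delta_0)$ in the statement is exactly the Chernoff exponent, which is the main cue that this is the intended route.

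First I would fix a single cell. Since the square has total area $n$ and is split into $m$ cells of equal area, each cell has area $n/m$; because every node is placed uniformly and independently, a given node lands in that cell with probability $(n/m)/n=1/m$. Writing $N_j$ for the number of nodes in cell $j$, we have $N_j=\sum_{i=1}^{n}X_i$ where the indicators $X_i$ (node $i$ lies in cell $j$) are i.i.d.\ Bernoulli$(1/m)$, so $N_j$ is Binomial$(n,1/m)$ with mean $\mu=n/m$. The hypothesis $m<n$ ensures $\mu>1$, so the deviation event is meaningful.

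Next I would apply the standard Chernoff argument to the upper tail: optimizing $\Pr\!\left(N_j\ge(1+\delta_0)\mu\right)\le e^{-s(1+\delta_0)\mu}\,\E\!\left[e^{sN_j}\right]$ over $s>0$ gives the bound $\left(e^{\delta_0}(1+\delta_0)^{-(1+\delta_0)}\right)^{\mu}=e^{-\Delta(\delta_0)\mu}$, reproducing exactly $\Delta(\delta_0)=(1+\delta_0)\ln(1+\delta_0)-\delta_0$. The lower tail yields the analogous exponent $\Delta'(\delta_0)=(1-\delta_0)\ln(1-\delta_0)+\delta_0$, and a short comparison (via the Taylor expansions near $\delta_0=0$, where $\Delta'(\delta_0)-\Delta(\delta_0)=\delta_0^3/3+o(\delta_0^3)$, together with monotonicity on $(0,1)$) shows $\Delta'(\delta_0)\ge\Delta(\delta_0)$, so the lower-tail probability is also at most $e^{-\Delta(\delta_0)\mu}$. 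Summing the two tail probabilities for each cell and union-bounding over the $m$ cells then yields that every cell simultaneously contains between $(1-\delta_0)n/m$ and $(1+\delta_0)n/m$ nodes, i.e.\ $\Theta(n/m)$, with probability at least $1-m\,e^{-\Delta(\delta_0)n/m}$, matching the statement.

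The only point requiring care, rather than a genuine obstacle, is the tail analysis itself: one must verify that the Chernoff exponent is precisely $\Delta(\delta_0)$ and that the lower tail decays at least as fast as the upper tail, so that the single exponent $\Delta(\delta_0)$ governs both sides and the two-sided union bound collapses to the stated prefactor $m$. Everything else is a direct instantiation of the Chernoff and union bounds, relying only on the independence and uniformity of the node placement and on $m<n$ to keep the mean $n/m$ growing.
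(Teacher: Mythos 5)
Your proof is correct and takes essentially the same route as the paper, which does not spell out an argument but simply defers to \cite[Lemma 4.1]{OzgurLevequeTse:07}---precisely the binomial/Chernoff-plus-union-bound derivation you give, including the key observation that the lower-tail exponent $(1-\delta_0)\ln(1-\delta_0)+\delta_0$ dominates $\Delta(\delta_0)$, so that the upper-tail exponent governs both sides. The only quibble is that summing both tail probabilities in each of the $m$ cells strictly yields the prefactor $2m$ rather than $m$; this constant is immaterial for the scaling conclusion and is glossed over in the cited reference as well.
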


This lemma can be proved by slightly modifying the proof
of~\cite[Lemma 4.1]{OzgurLevequeTse:07}. According to the
transmission scheme in access and exit routings, the
infrastructure-supported routing protocols are categorized into two
different protocols as in the following.



\begin{figure}[t!]
  \centering
  \leavevmode \epsfxsize=5.in
  \epsffile{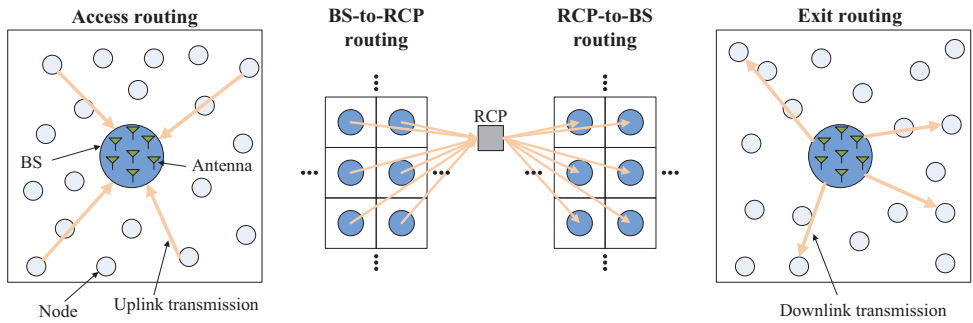}
  \caption{The ISH protocol. Each square represents a cell in the wireless network.}
  \label{Fig:ISHprotocol}
\end{figure}

\subsubsection{ISH Protocol}
There are $n/m$ nodes with high probability (whp) in each cell from
Lemma \ref{Lem:NumberNodesCell}. The ISH protocol is illustrated in
Fig. \ref{Fig:ISHprotocol} and each stage for the ISH protocol is
described as follows.
\begin{itemize}
  \item For the access routing, all source nodes in each cell transmit their
packets simultaneously to the home-cell BS via single-hop
multiple-access. A transmit power of $P$ is used at each node for
uplink transmission.
  \item The packets of source nodes are then jointly
decoded at the BS, assuming that the signals transmitted from the
other cells are treated as noise. Each BS performs a minimum
mean-square error (MMSE) estimation with successive interference
cancellation (SIC). More precisely, the $l\times 1$ unnormalized
receive filter $\mathbf{v}_i$ has the expression
\begin{align}
\mathbf{v}_i=\left(\mathbf{I}_l+\sum_{k>i}P\mathbf{h}_{sk}^{(u)}\mathbf{h}_{sk}^{(u)\dag}\right)^{-1}\mathbf{h}_{si}^{(u)},
\nonumber
\end{align}
which means that the receiver of BS $s$ for the $i$-th node cancels
signals from nodes $1,\cdots,i-1$ and treats signals from nodes
$i+1,\cdots,n/m$ as noise, for every $i$, when the canceling order
is given by $1,\cdots,n/m$.
  \item In the next stage, the decoded packets are transmitted from the BS to the RCP via BS-to-RCP link.
  \item The packets received at the RCP are conveyed to the corresponding BS via RCP-to-BS link.
  \item For the exit routing, each BS in each cell transmits $n/m$ packets received from the RCP, via single-hop broadcast to all the wireless nodes in its cell. The transmitters in the downlink are designed by the dual system of MMSE-SIC receive filters in the uplink, and thus perform an MMSE transmit precoding $\mathbf{u}_1,\cdots,\mathbf{u}_{n/m}$ with dirty paper coding at BS $s$:
\begin{align}
\mathbf{u}_i
=\left(\mathbf{I}_l+\sum_{k>i}p_k\mathbf{h}_{ks}^{(d)\dag}\mathbf{h}_{ks}^{(d)}\right)^{-1}\mathbf{h}_{is}^{(d)\dag},
\nonumber
\end{align}
where the power $p_k\geq 0$ allocated to each node satisfies
$\sum_{k}p_k\leq\frac{nP}{m}$ for $k=1,\cdots,n/m$. Note that a
total transmit power of $\frac{nP}{m}$ is used at each BS for
downlink transmission.
\end{itemize}

\begin{figure}[t!]
  \centering
  \leavevmode \epsfxsize=5.in
  \epsffile{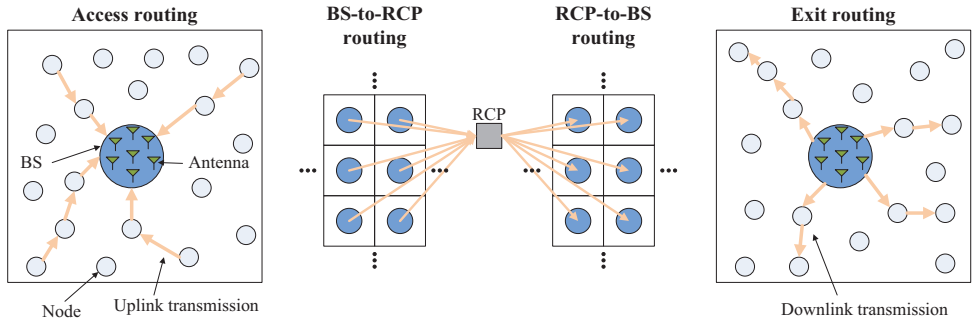}
  \caption{The IMH protocol. Each square represents a cell in the wireless network.}
  \label{Fig:IMHprotocol}
\end{figure}

\subsubsection{IMH Protocol}
Since the extended network is fundamentally power-limited, the ISH
protocol may not be effective especially when the node--BS distance
is quite long, which motivates us to introduce the IMH protocol. The
IMH protocol is illustrated in Fig. \ref{Fig:IMHprotocol} and each
stage for the IMH protocol is described as follows.
\begin{itemize}
  \item Each cell is
further divided into smaller square cells of area $2\log n$, termed
routing cells. Since $\min\{l,\sqrt{n/m}\}$ antennas are regularly
placed on the BS boundary, $\min\{l,\sqrt{n/m}\}$ MH paths can be
created simultaneously in each cell.
  \item For the access routing, the antennas placed only on the BS
boundary can receive the packets transmitted from one of the nodes
in the nearest-neighbor routing cell. Let us now consider how to set
an MH routing path from each source to the corresponding BS. Draw a
line connecting a source to one of the antennas of its BS and
perform MH routing horizontally or vertically by using the adjacent
routing cells passing through the line until its packets reach the
corresponding receiver (antenna).
  \item The BS-to-RCP and RCP-to-BS transmission is the
same as the ISH protocol case.
  \item For the exit routing, each antenna on the
BS boundary transmits the packets to one of the nodes in the
nearest-neighbor routing cell. Each antenna on the BS boundary
transmits its packets via MH transmissions along a line connecting
the antenna of its BS to the corresponding destination.
\end{itemize}

\subsection{Routing Protocols Without Infrastructure Support}
The protocols based only on infrastructure support may not be
sufficient to achieve the optimal capacity scaling especially when
$m$ and $l$ are small. Using one of the MH
transmission~\cite{GuptaKumar:00} and the HC
strategy~\cite{OzgurLevequeTse:07} may be beneficial in terms of
improving the achievable throughput scaling.

\subsubsection{MH Protocol}
The MH protocol is described as follows.
\begin{itemize}
  \item The network is divided into square routing cells of area $2\log n$.
  \item Draw a line connecting a source to its destination and
  perform MH routing horizontally or vertically by using the
  adjacent routing cells passing through the line until its packets
  reach the corresponding destination.
  \item A transmit power of $P$ is used.
  \item Each routing cell operates the $k$-time division multiple access scheme to avoid huge interference, where $k>0$ is some small constant independent of $n$.
\end{itemize}

\subsubsection{HC Protocol}
The procedure of the HC protocol is as follows.
\begin{itemize}
  \item The network is divided into clusters each having $M$ nodes.
  \item (Phase 1) Each source in a cluster transmits its packets to the other $M-1$ nodes in the same cluster.
  \item (Phase 2) A long-range MIMO transmission is performed between two clusters having a source and its destination.
  \item (Phase 3) Each node quantizes the received observations and delivers
  the quantized data to the rest of nodes in the same cluster. By
  collecting all quantized observations, each destination can decode
  its packets.
\end{itemize}

When each node transmits data within its cluster in Phases 1 and 3,
it is possible to apply another smaller-scaled cooperation within
each cluster by dividing each cluster into smaller ones. By
recursively applying this procedure, it is possible to establish the
HC strategy in the network.

\subsection{The Transmission Rates of Routing Protocols}\label{Sec:AchievableThroughputProtocols}

As addressed earlier, the RCP is incorporated into the hybrid
network model using BSs. In this subsection, we show how much
transmission rates are obtained via wireless links between ad hoc
nodes and home-cell BSs for each infrastructure-supported routing
protocol. We remark that the transmission rates in both access and
exit routings are irrelevant to the rate of backhaul links and thus
are essentially the same as the infinite-capacity backhaul link
case~\cite{ShinJeonDevroyeVuChungLeeTarokh:08}. The transmission
rates of each routing protocol are given in the following lemmas.

\begin{lemma}[\cite{ShinJeonDevroyeVuChungLeeTarokh:08}]\label{Lem:RateISH}
Suppose that the ISH protocol is used in the hybrid network of unit
node density. Then, the transmission rate in each cell for both
access and exit routings is given by
\begin{align}\label{Eq:ThroughputISH-BSunlimited}
    \Omega\left( l\left(\frac{m}{n}\right)^{\alpha/2-1}\right).
\end{align}
\end{lemma}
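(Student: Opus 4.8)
The plan is to establish the achievable rate of the ISH protocol by analyzing the access routing as a single-hop multiple-access channel (the exit routing follows by the MAC--BC duality argument already invoked for the power constraint). First I would fix attention on a single cell with its home-cell BS equipped with $l$ antennas, and recall from Lemma~\ref{Lem:NumberNodesCell} that there are $\Theta(n/m)$ source nodes in the cell with high probability. Since each cell has area $n/m$ and the network is extended, the node--BS distance for every node in the cell is $O(\sqrt{n/m})$, so each uplink channel gain $1/r_{bi,t}^{(u)\alpha}$ is $\Omega\bigl((n/m)^{-\alpha/2}\bigr)$. The goal is to show that the sum of the per-node rates achieved by the MMSE-SIC receiver $\mathbf{v}_i$, after accounting for out-of-cell interference treated as noise, scales as the claimed $\Omega\bigl(l(m/n)^{\alpha/2-1}\bigr)$.

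The key steps, in order, would be as follows. First I would lower-bound the received signal power at the BS: combining the $l$ antennas and the per-antenna gain $\Omega\bigl((n/m)^{-\alpha/2}\bigr)$ under transmit power $P$ gives a useful signal level on the order of $l\,(n/m)^{-\alpha/2}$ per node. Second, I would bound the aggregate out-of-cell interference. Grouping the interfering cells into rings around the target cell and summing the path-loss contributions $\sum_b 1/r^{\alpha}$ over nodes in successively distant cells, the series converges for $\alpha>2$, so the total interference power stays bounded by a constant independent of $n$; this is the standard bounded-interference argument for extended networks. Third, with the interference-plus-noise level thus controlled, the effective SINR per node is $\Omega\bigl((n/m)^{-\alpha/2}\bigr)$, and the array gain of the $l$-antenna MMSE-SIC receiver contributes a multiplicative factor of order $l$. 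Fourth, summing $\log(1+\mathrm{SINR})$ over the $\Theta(n/m)$ nodes and invoking $\log(1+x)=\Theta(x)$ in the power-limited regime where the per-node SINR is small yields the sum rate
\begin{align}
\Omega\!\left(\frac{n}{m}\cdot l\left(\frac{n}{m}\right)^{-\alpha/2}\right)
=\Omega\!\left(l\left(\frac{m}{n}\right)^{\alpha/2-1}\right), \nonumber
\end{align}
which is exactly \eqref{Eq:ThroughputISH-BSunlimited}.

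The main obstacle will be making the array-gain-times-SINR accounting rigorous at the level of the MMSE-SIC filter: one must verify that the successive cancellation does not degrade the per-node rate by more than a constant factor and that the random phases $\theta_{bi,t}^{(u)}$ together with the regular/uniform antenna placement prescribed in Section~\ref{SEC:System} guarantee the $\Theta(l)$ array gain with high probability rather than merely in expectation. The cleanest route is to observe that this is precisely the uplink transmission rate already established in~\cite{ShinJeonDevroyeVuChungLeeTarokh:08} for the infinite-capacity infrastructure setting, since, as noted in the text preceding the lemma, the access- and exit-routing rates are irrelevant to the backhaul link rate. Thus I would complete the argument by reducing to that prior result, and verify only that the modified antenna configuration and the presence of the RCP do not alter the wireless access/exit rates, leaving the rate expression unchanged.
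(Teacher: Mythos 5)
Your proposal is correct and ultimately takes the same route as the paper: the paper gives no standalone proof of this lemma but imports it directly from \cite{ShinJeonDevroyeVuChungLeeTarokh:08}, observing only that the access- and exit-routing rates are irrelevant to the backhaul link rate and hence unchanged by the RCP setting, which is precisely the reduction you settle on in your final paragraph. Your preliminary MMSE-SIC/interference sketch, with the scaling arithmetic $\frac{n}{m}\cdot l\left(\frac{n}{m}\right)^{-\alpha/2}=l\left(\frac{m}{n}\right)^{\alpha/2-1}$, is essentially the content of the cited work's own proof rather than anything the present paper supplies, so no further argument is required here.
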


\begin{lemma}[\cite{ShinJeonDevroyeVuChungLeeTarokh:08}]\label{Lem:RateIMH}
Suppose that the IMH protocol is used in the hybrid network of unit
node density. Then, the transmission rate in each cell for both
access and exit routings is given by
\begin{align}\label{Eq:ThroughputIMH-BSunlimited}
    \Omega\left(\min\left\{
    l,\left(\frac{n}{m}\right)^{1/2-\epsilon}\right\}\right),
\end{align}
where $\epsilon>0$ is an arbitrarily small constant.
\end{lemma}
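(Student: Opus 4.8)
The plan is to obtain the bound as the product of two quantities: the number of nearest-neighbor multihop (MH) paths the cell can run in parallel during access routing, and the end-to-end rate each such path sustains. The exit routing rate then follows by the same uplink--downlink reciprocity already invoked for the power constraints, so I would argue the access direction in full and treat the exit direction as symmetric. By Lemma~\ref{Lem:NumberNodesCell} the cell, of area $n/m$, holds $\Theta(n/m)$ sources whp; partitioning it into routing cells of area $2\log n$ and reapplying the same Chernoff estimate guarantees that every routing cell contains $\Theta(\log n)$ nodes whp, so a relay is always available for each hop. Since $\min\{l,\sqrt{n/m}\}$ antennas sit on the BS boundary, I would designate these as the terminal receivers of $\min\{l,\sqrt{n/m}\}$ MH paths, routing each source first horizontally and then vertically toward the boundary antenna assigned to its column, and balancing the assignment so that each antenna drains a strip of routing cells of roughly equal width.

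Next I would lower-bound a single active hop. A hop spans distance $\Theta(\sqrt{\log n})$, so the received power scales as $\Theta(P(\log n)^{-\alpha/2})$; running the $k$-TDMA reuse pattern from the pure MH protocol, the aggregate interference from the geometrically spaced co-active transmitters converges for $\alpha>2$ and stays of the same order as the signal, so each scheduled link sustains a rate that is subpolynomial but $n^{-o(1)}$, i.e. $\Omega((n/m)^{-\epsilon})$ for every fixed $\epsilon>0$. Because this loss is only polylogarithmic it does not change the scaling exponent of a path, so each of the $\min\{l,\sqrt{n/m}\}$ paths carries an order-one (in the exponent) throughput, and summing over them gives an aggregate cell rate of $\Omega(\min\{l,\sqrt{n/m}\})$ up to polylogarithmic factors. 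Writing $\sqrt{n/m}=(n/m)^{1/2}$ and absorbing these factors into an arbitrarily small power of $n/m$ yields the stated $\Omega\!\left(\min\{l,(n/m)^{1/2-\epsilon}\}\right)$: in the path-limited regime $l\le(n/m)^{1/2-\epsilon}$ the $l$ paths are column-disjoint and the bound is $\Omega(l)$, while in the boundary-limited regime the count of usable paths caps at the $\sqrt{n/m}$-scale, reproducing the second term.

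The step I expect to be the main obstacle is the congestion and packing analysis near the boundary. When $l$ is as large as $\sqrt{n/m}$ the boundary antennas are spaced only $\Theta(1)$ apart, so $\Theta(\sqrt{\log n})$ of them fall inside a single boundary routing cell and cannot all receive distinct packets in one slot. The hard part is showing that one can nonetheless keep $\Theta(\sqrt{n/m})$ paths effectively active---by time-sharing the congested boundary cells and verifying that, under the load-balanced assignment, no interior routing cell is crossed by more than a polylogarithmic number of paths---so that the bottleneck cell adjacent to a boundary antenna is served a constant fraction of the time. It is precisely these $\log n$ factors, arising from the routing-cell area $2\log n$, the hop length $\sqrt{\log n}$, and the boundary packing, that accumulate and are then folded into the arbitrarily small $\epsilon$, producing $(n/m)^{1/2-\epsilon}$ in place of the naive $\sqrt{n/m}$. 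Verifying that the $k$-TDMA schedule keeps interference bounded and that the exit routing is symmetric is routine and parallels the pure MH analysis.
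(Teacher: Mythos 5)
This lemma is presented in the paper without any proof of its own---it is imported verbatim from \cite{ShinJeonDevroyeVuChungLeeTarokh:08}, backed only by the protocol description in Section III---and your reconstruction follows essentially the same argument as that reference: $\min\{l,\sqrt{n/m}\}$ parallel MH paths terminating at the boundary antennas (whose regular placement with ``enough spacing'' in the paper's antenna configuration is what resolves the boundary-packing issue you flag as the main obstacle), a per-hop rate of order $(\log n)^{-\alpha/2}$ over hops of length $\Theta(\sqrt{\log n})$ under $k$-TDMA with interference convergent for $\alpha>2$, polylogarithmic losses folded into $\epsilon$, and exit routing by uplink--downlink symmetry since the BS power budget $nP/m$ leaves at least $P$ per active antenna. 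The one caveat, which your write-up inherits from the lemma statement itself rather than introduces, is that the polylogarithmic per-hop loss can strictly be absorbed only into the $(n/m)^{1/2-\epsilon}$ term and not into the bare $l$ term (so ``$\Omega(l)$'' in the path-limited regime holds only at the level of scaling exponents $e=\lim_{n\rightarrow\infty}\log T_n/\log n$), which is precisely the level at which the paper uses the result.
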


The total throughput scaling laws achieved by the MH and HC
protocols that utilize no infrastructure were derived in
\cite{GuptaKumar:00} and \cite{OzgurLevequeTse:07}, respectively,
and are given as follows:
\begin{align}
    T_{n,\textrm{MH}}&=\Omega(n^{1/2-\epsilon}) \nonumber
\end{align}
and
\begin{align}
 T_{n,\textrm{HC}}&=\Omega(n^{2-\alpha/2-\epsilon}), \nonumber
\end{align}
where $\epsilon>0$ is an arbitrarily small constant.

\section{Achievability Result} \label{SEC:Routing}

In this section, we first introduce information-theoretic
two-dimensional operating regimes with respect to the number of BSs
and the number of antennas per BS (i.e., the scaling parameters
$\beta$ and $\gamma$) for the infinite-capacity backhaul link
scenario. We then derive the minimum rate of each backhaul link,
required to achieve the optimal capacity scaling, according to the
two-dimensional operating regimes. Assuming that the rate of each
backhaul link scales at an arbitrary rate relative to $n$, we
characterize a new achievable throughput scaling, which generalizes
the existing achievability result
in~\cite{ShinJeonDevroyeVuChungLeeTarokh:08}. The
infrastructure-limited regime in which the throughput scaling is
limited by the rate of finite-capacity backhaul links is also
identified. Furthermore, we closely scrutinize our achievability
result according to the {\em three-dimensional operating regimes}
identified by introducing a new scaling parameter $\eta$.

\subsection{Two-Dimensional Operating Regimes With Infinite-Capacity Infrastructure}

The optimal capacity scaling was derived in
\cite{ShinJeonDevroyeVuChungLeeTarokh:08} for hybrid networks with
no RCP when the rate of each BS-to-BS link is unlimited. Although
our hybrid network characterized in the presence of RCP differs from
the network model in~\cite{ShinJeonDevroyeVuChungLeeTarokh:08}, the
existing analytical result, including the optimal capacity scaling
and information-theoretic operating regimes, can be
straightforwardly applied to our network setup for the
infinite-capacity backhaul link case, i.e., $\eta\rightarrow\infty$.

\begin{figure}[t!]
  \centering
  \leavevmode \epsfxsize=3.0in
  \epsffile{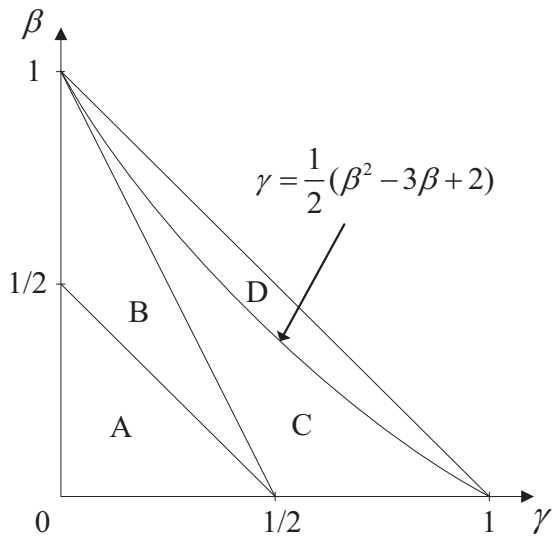}
  \caption{The operating regimes on the achievable throughput scaling with respect to $\beta$ and $\gamma$ for $\eta\rightarrow\infty$.}
  \label{Fig:OperatingRegimesInfiniteBScapacity}
\end{figure}

As illustrated in Fig.~\ref{Fig:OperatingRegimesInfiniteBScapacity},
when $\eta\rightarrow\infty$, two-dimensional operating regimes with
respect to $\beta$ and $\gamma$ are divided into four sub-regimes.
To be specific, the best strategy among the four schemes ISH, IMH,
MH, and HC depends on the path-loss exponent $\alpha$ and the two
scaling parameters $\beta$ and $\gamma$ under the network, and the
regimes at which the best achievable throughput is determined
according to the value of $\alpha$ synthetically constitute our
operating regimes. The best scheme and its corresponding scaling
exponent $e(\alpha,\beta,\gamma,\infty)$ in each regime are
summarized in Table~\ref{Tab:RateExtended}. The four sub-regimes are
described in more detail as follows.
\begin{itemize}
    \item In Regime A, the infrastructure is not helpful to improve the capacity scaling since $\beta$ and $\gamma$ are too small.
    \item In Regime B, the HC and IMH protocols are used to achieve the optimal capacity scaling. As $\alpha$ increases, the IMH protocol outperforms the HC since long-range MIMO transmissions of the HC becomes inefficient at the high path-loss attenuation regime.
    \item In Regime C, using the HC and IMH protocols guarantees the order optimality as in Regime B, but leads to a different throughput scaling from that in Regime B.
    \item In Regime D, the HC protocol has the highest throughput when $\alpha$ is small, but as $\alpha$ increases, the best scheme becomes the ISH protocol. Finally, the IMH protocol becomes dominant when $\alpha$ is very large since the ISH protocol has a power limitation at the high path-loss attenuation regime.
\end{itemize}

In the next subsections, we shall derive the minimum required rate
of backhaul links under each of these operating regimes. In
addition, by introducing the scaling parameter $\eta$, we shall
identify new three-dimensional operating regimes while
characterizing a generalized achievable throughput scaling for the
hybrid network with finite-capacity infrastructure.

\begin{table}[t]
    \centering
    \caption{Achievability result for a hybrid extended network with infinite-capacity infrastructure~\cite{ShinJeonDevroyeVuChungLeeTarokh:08}}
    \label{Tab:RateExtended}
\begin{tabular}{|c|c|c|c|}
  \hline
  Regime & Condition & Best scheme & Throughput scaling exponent $e$\\
  \noalign{\hrule height 1.2pt}
  \hline
  \multirow{2}{*}{A} & $2<\alpha<3$  & HC & $2-\frac{\alpha}{2}$\\
                     & $\alpha\geq 3$ & MH & $\frac{1}{2}$\\
  \hline
  \multirow{2}{*}{B} & $2<\alpha<4-2\beta-2\gamma$  & HC & $2-\frac{\alpha}{2}$\\
                     & $\alpha\geq 4-2\beta-2\gamma$ & IMH & $\beta+\gamma$\\
  \hline
  \multirow{2}{*}{C} & $2<\alpha<3-\beta$  & HC & $2-\frac{\alpha}{2}$ \\
                     & $\alpha\geq 3-\beta$ & IMH & $\frac{1+\beta}{2}$ \\
  \hline
  \multirow{3}{*}{D} & $2<\alpha<\frac{2(1-\gamma)}{\beta}$  & HC & $2-\frac{\alpha}{2}$\\
                     & $\frac{2(1-\gamma)}{\beta}\leq\alpha<1+\frac{2\gamma}{1-\beta}$ & ISH & $1+\gamma-\frac{\alpha(1-\beta)}{2}$\\
                     & $\alpha\geq 1+\frac{2\gamma}{1-\beta}$ & IMH & $\frac{1+\beta}{2}$\\
  \hline
\end{tabular}
\end{table}

\subsection{The Minimum Required Rate of Backhaul Links}

The supportable transmission rate of the backhaul link between BSs
and the RCP may increase proportionally with the cost that one needs
to pay. In order to give a cost-effective backhaul solution for a
large-scale network, we would like to derive the minimum rate
scaling of each backhaul link required to achieve the same capacity
scaling law as in the infinite-capacity backhaul link case.
According to the two-dimensional operating regimes in Fig.
\ref{Fig:OperatingRegimesInfiniteBScapacity}, the required rate of
each BS-to-RCP link (or each RCP-to-BS link), denoted by
$C_{\textrm{BS}}$, is derived in the following theorem.

\begin{theorem}\label{Thm:MinReqRate}
The minimum rate of each backhaul link required to achieve the
optimal capacity scaling of hybrid networks with infinite-capacity
infrastructure is given by
\begin{align}\label{Eq:MinReqRateCBS}
    C_{\textrm{BS}} = \left\{ \begin{array}{ll}
    0 &\textrm{  for Regime A}\\
    \Omega\left(l\right) &\textrm{  for Regime B}\\
    \Omega\left(\left(\frac{n}{m}\right)^{1/2-\epsilon}\right) &\textrm{  for Regime C}\\
    \Omega\left(l\left(\frac{m}{n}\right)^{\log_m (n/l)-1}\right) & ~\textrm{for Regime D} \\
    \end{array} \right.
\end{align}
for an arbitrarily small constant $\epsilon>0$. The associated
operating regimes with respect to $\beta$ and $\gamma$ are
illustrated in Fig.~\ref{Fig:OperatingRegimesInfiniteBScapacity}.
\end{theorem}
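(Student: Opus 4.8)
The plan is to reduce the whole statement to a single bottleneck principle. In any infrastructure-supported protocol the packet of a source traverses three cascaded stages --- access routing, backhaul transmission, and exit routing --- so the throughput carried by a cell is limited by the minimum of its per-cell wireless rate and its backhaul rate $C_{\textrm{BS}}$. By Lemmas~\ref{Lem:RateISH} and~\ref{Lem:RateIMH} the per-cell access/exit rate $R_{\textrm{cell}}$ is determined purely by the wireless links and is unaffected by the backhaul. Each BS must forward its entire cell's source traffic to the RCP (BS-to-RCP) and receive its entire cell's destination traffic from the RCP (RCP-to-BS), each at rate $R_{\textrm{cell}}$; hence the backhaul ceases to be a bottleneck exactly when $C_{\textrm{BS}} = \Omega(R_{\textrm{cell}})$. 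This gives \emph{sufficiency}: choosing $C_{\textrm{BS}} = \Theta(R_{\textrm{cell}})$ lets the infrastructure scheme attain its infinite-capacity per-cell rate, hence the same aggregate $T_n = m \cdot R_{\textrm{cell}}$. For \emph{necessity} I would use that, in every regime where an infrastructure scheme is optimal, the non-infrastructure schemes (MH and HC) achieve a strictly smaller exponent; the optimal traffic must therefore be routed almost entirely through the $m$ BSs, and since the aggregate infrastructure throughput is $O(m \cdot C_{\textrm{BS}})$ this forces $C_{\textrm{BS}} = \Omega(R_{\textrm{cell}})$.

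It then remains to compute $R_{\textrm{cell}}$ regime by regime and, where more than one infrastructure scheme appears across the range of $\alpha$, to maximize over $\alpha$ so that the stated rate guarantees optimality for every admissible $\alpha$. In Regime A no infrastructure scheme is ever optimal, so $C_{\textrm{BS}} = 0$. In Regimes B and C the only infrastructure scheme is IMH, whose per-cell rate $\min\{l,(n/m)^{1/2-\epsilon}\}$ from Lemma~\ref{Lem:RateIMH} is independent of $\alpha$; I would fix which branch of the minimum is active by reading the IMH total exponent off Table~\ref{Tab:RateExtended} and subtracting $\beta$. This yields per-cell exponent $\gamma$ in Regime B (so $R_{\textrm{cell}} = l$ and the regime lies in $\gamma \le (1-\beta)/2$) and per-cell exponent $(1-\beta)/2$ in Regime C (so $R_{\textrm{cell}} = (n/m)^{1/2-\epsilon}$ and $\gamma \ge (1-\beta)/2$), matching the two middle cases of~\eqref{Eq:MinReqRateCBS}.

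Regime D is the only case containing both ISH and IMH, and here the maximization over $\alpha$ is the crux. By Lemma~\ref{Lem:RateISH} the ISH per-cell rate $l(m/n)^{\alpha/2-1}$ is \emph{decreasing} in $\alpha$ (since $m/n<1$), whereas the IMH rate is flat in $\alpha$. The key observation I would exploit is that at the ISH--IMH boundary $\alpha = 1 + 2\gamma/(1-\beta)$ both schemes share the total exponent $(1+\beta)/2$ and therefore the same per-cell rate; monotonicity of the ISH rate then places its maximum over the entire infrastructure range at the opposite (HC--ISH) boundary $\alpha = 2(1-\gamma)/\beta$, and this maximum dominates the IMH rate. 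Evaluating $l(m/n)^{\alpha/2-1}$ at $\alpha/2 = (1-\gamma)/\beta = \log_m (n/l)$ produces exactly $l(m/n)^{\log_m (n/l)-1}$, the last case of~\eqref{Eq:MinReqRateCBS}.

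I expect the genuine obstacle to be the necessity direction rather than the arithmetic: one must exclude any hybrid strategy that offloads part of the traffic onto the wireless fabric and the rest onto an under-provisioned backhaul yet still reaches the optimal exponent. The clean way to close this is to quantify the \emph{strict} gap between the optimal exponent and the best non-infrastructure exponent in each of Regimes B, C, and D, after which the $O(m \cdot C_{\textrm{BS}})$ ceiling on aggregate infrastructure throughput pins $C_{\textrm{BS}}$ from below. The Regime-D maximization is a secondary, purely order-of-magnitude point that the monotonicity-plus-boundary-matching argument settles without expanding any exponents.
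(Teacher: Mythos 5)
Your proposal is correct and takes essentially the same route as the paper: the paper likewise computes $C_{\textrm{BS}}$ as (number of S--D pairs sharing a backhaul link) $\times$ (per-pair rate), which collapses to the per-cell throughput $T_{n,\textrm{IMH}}/m$ or $T_{n,\textrm{ISH}}/m$ of the dominant infrastructure protocol, and in Regime D it maximizes the ISH requirement $l(m/n)^{\alpha/2-1}$ over $\alpha$ via the same monotonicity, placing the maximum at $\alpha=2(1-\gamma)/\beta$ and obtaining $l(m/n)^{\log_m (n/l)-1}$. Two minor differences: the paper establishes ISH-over-IMH domination in Regime D through the explicit condition $\gamma\ge\frac{1}{2}(\beta^2-3\beta+2)$ rather than your cleaner boundary-matching argument, and the necessity direction you flag is not part of the paper's proof of this theorem at all --- it is deferred to the cut-set bound of Theorem~\ref{Thm:TotalThroughputUpperBound}, whose $mR_{\textrm{BS}}$ term supplies exactly the $O(mC_{\textrm{BS}})$ ceiling you invoke.
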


\begin{proof}
The required rate of each backhaul link is determined by the
multiplication of the number of S--D pairs that transmit packets
simultaneously through each link and the transmission rate of the
infrastructure-supported routing protocols for each S--D pair. From
Table~\ref{Tab:RateExtended}, no infrastructure-supported protocol
is needed in Regime A to achieve the optimal capacity scaling,
thereby resulting in $C_{\textrm{BS}}=0$ in the regime. Let us first
focus on the IMH protocol, which is used in Regimes B, C, and D when
$\alpha$ is greater than or equal to $4-2\beta-2\gamma$, $3-\beta$,
and $1+\frac{2\gamma}{1-\beta}$, respectively (see
Table~\ref{Tab:RateExtended}). Let $T_{n,\textrm{IMH}}$ denote the
aggregate throughput achieved by the IMH protocol when the rate of
each backhaul link is unlimited. Then, from Lemma~\ref{Lem:RateIMH},
it follow that $T_{n,\textrm{IMH}}=\Omega\left(m\min\left\{
l,\left(\frac{n}{m}\right)^{1/2-\epsilon}\right\}\right)$. Since
only $\min\{l,\sqrt{n/m}\}$ nodes among $n/m$ nodes in each cell
transmit packets using the IMH protocol, the transmission rate of
each activated S--D pair is given by
\begin{align}
    \frac{T_{n,\textrm{IMH}}}{n}\frac{n}{m}\frac{1}{\min\left\{l,\sqrt{\frac{n}{m}}\right\}}
    =\frac{T_{n,\textrm{IMH}}}{m}\frac{1}{\min\left\{l,\sqrt{\frac{n}{m}}\right\}}.
    \nonumber
\end{align}
Note that under the IMH protocol, the number of S--D pairs that
transmit packets simultaneously through each backhaul link is
$\min\left\{l,\sqrt{\frac{n}{m}}\right\}$. Hence, in Regimes B, C,
and D, the minimum required rate of each backhaul link to guarantee
the throughput $T_{n,\textrm{IMH}}$, denoted by
$C_{\textrm{BS,IMH}}$, is given by
\begin{align}\label{Eq:PerCellRequiredRateIMH}
    C_{\textrm{BS,IMH}}&=\frac{T_{n,\textrm{IMH}}}{m}\frac{1}{\min\left\{l,\sqrt{\frac{n}{m}}\right\}}
    \min\left\{l,\sqrt{\frac{n}{m}}\right\}
    \nonumber\\
    &=\Omega\left(\min\left\{l,\left(\frac{n}{m}\right)^{1/2-\epsilon}
    \right\}
    \right)
    \nonumber\\
    &=\left\{\begin{array}{ll}
    \Omega(l) &\textrm{  for Regime B}\\
    \Omega\left(\left(\frac{n}{m}\right)^{1/2-\epsilon}\right) &\textrm{  for Regimes C and D,}\\
    \end{array}
    \right.
\end{align}
which is the same as $C_{\textrm{BS}}$ in Regimes B and C since only
the IMH protocol is used between the two infrastructure-supported
protocols in these regimes. Now let us turn to the ISH protocol,
which is used in Regime D when
$\frac{2(1-\gamma)}{\beta}\leq\alpha<1+\frac{2\gamma}{1-\beta}$ (see
Table~\ref{Tab:RateExtended}). Let $T_{n,\textrm{ISH}}$ denote the
aggregate throughput achieved by the ISH protocol when the rate of
each backhaul link is unlimited. Then, from Lemma~\ref{Lem:RateISH},
it follows that
$T_{n,\textrm{ISH}}=\Omega\left(ml\left(\frac{m}{n}\right)^{\alpha/2-1}
\right)$. Since each S--D pair transmits at a rate
$T_{n,\textrm{ISH}}/n$ and the number of S--D pairs that transmit
packets simultaneously through each link is $n/m$, the minimum
required rate of each backhaul link for a given $\alpha$ to
guarantee the throughput $T_{n,\textrm{ISH}}$, denoted by
$C_{\textrm{BS,ISH}}$, is given by
\begin{align}\label{Eq:PerCellRequiredRateISH}
    C_{\textrm{BS,ISH}}&=\frac{T_{n,\textrm{ISH}}}{n}\frac{n}{m}=\frac{T_{n,\textrm{ISH}}}{m}
    =\Omega\left(l\left(\frac{m}{n}\right)^{\alpha/2-1}
    \right).
\end{align}

\begin{figure}[t!]
  \centering
  \leavevmode \epsfxsize=3.7in
  \epsffile{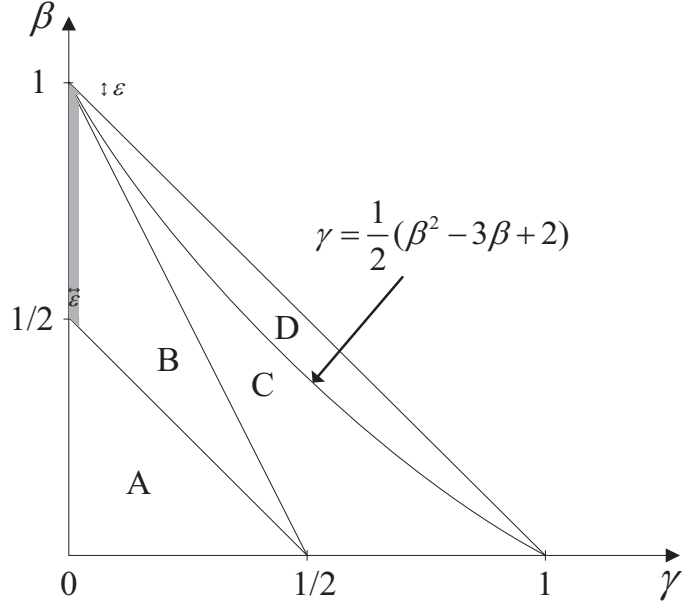}
  \caption{The operating regime in which $C_{\textrm{BS}}=O(n^\epsilon)$ for an arbitrarily small $\epsilon>0$ (represented by the shaded area).}
  \label{Fig:SmallScalingLaw}
\end{figure}

Since either the ISH or IMH protocol can be used according to the
value of $\alpha$ in Regime D, we should compare the required rates
of backhaul links for both protocols. Let us find the minimum
required rate $C_{\textrm{BS}}$ with which
$\max\{T_{n,\textrm{IMH}},T_{n,\textrm{ISH}}\}$ can be guaranteed
for all $\alpha$. Using (\ref{Eq:PerCellRequiredRateIMH}) and
(\ref{Eq:PerCellRequiredRateISH}), in Regime D, we have
\begin{align}
    &C_{\textrm{BS}}
    \nonumber\\
    &= \Omega \left(\max\left\{\left(\frac{n}{m}\right)^{1/2-\epsilon},
    \right.\right.
    \nonumber\\
    &\left.\left.
    ~~~~~~~~~~~~~~~\max_{\frac{2(1-\gamma)}{\beta}\leq\alpha<1+\frac{2\gamma}{1-\beta}}
    l\left(\frac{m}{n}\right)^{\alpha/2-1}\right\}\right)
    \nonumber\\
    &=\Omega \left(\max\left\{\left(\frac{n}{m}\right)^{1/2-\epsilon},
    l\left(\frac{m}{n}\right)^{(1-\gamma)/\beta-1}\right\}\right)
    \nonumber\\
    &=\Omega
    \left(l\left(\frac{m}{n}\right)^{(1-\gamma)/\beta-1}\right)\nonumber\\
    &=\Omega\left(l\left(\frac{m}{n}\right)^{\log_m (n/l)-1}\right),
    \nonumber
\end{align}
where the third equality holds since
$\gamma\geq\frac{1}{2}(\beta^2-3\beta+2)$ for Regime D. Therefore,
the minimum required rate of each backhaul link, $C_{\textrm{BS}}$,
is finally given by (\ref{Eq:MinReqRateCBS}), which completes the
proof of Theorem~\ref{Thm:MinReqRate}.
\end{proof}

This result indicates that a judicious rate scaling of the BS-to-RCP
link (or the RCP-to-BS link) under a given operating regime leads to
the order optimality of our general hybrid network along with
cost-effective backhaul links.

\begin{remark}[Negligibly small backhaul link rates]
It is obvious to see that the backhaul is not needed at all in
Regime A where using either the MH or HC protocol leads to the best
throughput performance of the network. An interesting observation is
now to find other regimes in which the minimum required rate of each
backhaul link is negligibly small, i.e.,
$C_{\textrm{BS}}=O(n^\epsilon)$ for an arbitrarily small
$\epsilon>0$. From Theorem \ref{Thm:MinReqRate}, it is shown that
$C_{\textrm{BS}}=O(n^\epsilon)$ if $\gamma=\epsilon$ in Regimes B
and D or if $\beta=1-\epsilon$ in Regimes C and D, which is depicted
in Fig. \ref{Fig:SmallScalingLaw}. This result reveals that for the
case where the number of antennas at each BS is very small or the
number of BSs is almost the same as the number of nodes, the
backhaul link rate $R_{\textrm{BS}}$ does not need to be infinitely
high even for a large number of wireless nodes in the network.
\end{remark}

\subsection{Generalized Achievable Throughput Scaling With Finite-Capacity Infrastructure}
If the rate of each backhaul link, $R_{\textrm{BS}}$, is greater
than or equal to the minimum required rate $C_{\textrm{BS}}$ in
Theorem~\ref{Thm:MinReqRate}, then the achievable throughput scaling
$T_n$ in the hybrid network with finite-capacity infrastructure is
the same as the infinite-capacity infrastructure backhaul link
scenario. Otherwise, $T_n$ will be decreased accordingly depending
on the operating regimes for which the infrastructure-supported
routing protocols are used. In this subsection, a generalized
achievable throughput scaling is derived with an arbitrary rate
scaling of each BS-to-RCP or RCP-to-BS link (or with the scaling
parameter $\eta\in(-\infty,\infty)$). The three-dimensional
operating regimes with respect to the number of BSs, $m$, the number
of antennas per BS, $l$, and the backhaul link rate,
$R_{\textrm{BS}}$, are also identified. We start from establishing
the following theorem.

\begin{theorem}\label{Thm:AchievableRateLimited}
In the hybrid network with the backhaul link rate $R_{\textrm{BS}}$,
the aggregate throughput $T_n$ scales as
\begin{align}
    &\Omega\left(
    \max\left\{\min\left\{
    \max\left\{ml\left(\frac{m}{n}\right)^{\alpha/2-1},
    \right.\right.\right.\right.
    \nonumber\\
    &\left.\left.
    ~~~~~~~~~~~~~~~~~~~~~~~~~~
    \min\left\{ ml,m\left(\frac{n}{m}\right)^{1/2-\epsilon}\right\}\right\}
    ,m R_{\textrm{BS}}\right\},
    \nonumber\\
    &\left.\left.
    ~~~~~~~~~~~n^{1/2-\epsilon},n^{2-\alpha/2-\epsilon}\right\}
    \right),
\end{align}
where $\epsilon>0$ is an arbitrarily small constant.
\end{theorem}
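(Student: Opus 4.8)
The plan is to establish the lower bound on $T_n$ by exhibiting, for each admissible regime of the scaling parameters, an achievable scheme whose throughput matches the stated expression, and then taking the best among all schemes. The outer $\max$ over the three quantities $mR_{\textrm{BS}}$-constrained infrastructure throughput, $n^{1/2-\epsilon}$, and $n^{2-\alpha/2-\epsilon}$ reflects the fact that we are free to choose the single best protocol among the two infrastructure-supported routings, pure MH, and HC. The last two arguments of the outer $\max$ are immediate from the already-quoted results $T_{n,\textrm{MH}}=\Omega(n^{1/2-\epsilon})$ and $T_{n,\textrm{HC}}=\Omega(n^{2-\alpha/2-\epsilon})$, since MH and HC use no infrastructure and are therefore unaffected by $R_{\textrm{BS}}$. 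Hence the substantive work is to justify the first (infrastructure-supported) term.

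For that term, I would argue as follows. The inner $\max$ between $ml(m/n)^{\alpha/2-1}$ and $\min\{ml,m(n/m)^{1/2-\epsilon}\}$ is just the best of the two infrastructure routings in the \emph{unlimited-backhaul} case: multiplying the per-cell access/exit rates of Lemma~\ref{Lem:RateISH} and Lemma~\ref{Lem:RateIMH} by the number of cells $m$ gives $T_{n,\textrm{ISH}}=\Omega(ml(m/n)^{\alpha/2-1})$ and $T_{n,\textrm{IMH}}=\Omega(\min\{ml,m(n/m)^{1/2-\epsilon}\})$, respectively. The key new ingredient is the outer $\min$ with $mR_{\textrm{BS}}$, which encodes the infrastructure bottleneck. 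Since there are $m$ BSs and each BS has a backhaul link of rate $R_{\textrm{BS}}$, the total rate that can cross the backhaul (aggregated over all BS-to-RCP links) is $mR_{\textrm{BS}}$; every packet routed through infrastructure must traverse a backhaul link, so the infrastructure-supported throughput cannot exceed $mR_{\textrm{BS}}$ in order. I would make this precise by recalling, from the proof of Theorem~\ref{Thm:MinReqRate}, that the number of S--D pairs simultaneously using each backhaul link and their per-pair rates are exactly what was computed there; when $R_{\textrm{BS}}<C_{\textrm{BS}}$ the achievable per-pair rate is throttled so that the aggregate infrastructure throughput degrades gracefully to $\Theta(mR_{\textrm{BS}})$, and when $R_{\textrm{BS}}\geq C_{\textrm{BS}}$ the $\min$ is inactive and we recover the unlimited-backhaul rate. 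This is precisely the $\min\{\cdot,mR_{\textrm{BS}}\}$ structure.

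The main obstacle, and the step that needs genuine care, is verifying that the throughput genuinely scales \emph{linearly} in $R_{\textrm{BS}}$ throughout the entire infrastructure-limited range rather than merely being upper-bounded by $mR_{\textrm{BS}}$. Concretely, one must check that when each backhaul link carries less than its demanded load, the three stages (access routing, backhaul transmission, exit routing) can be rebalanced---e.g., by time-sharing the wireless transmissions or by activating a reduced number of S--D pairs per cell---so that the wireless access/exit stages never become the binding constraint while the backhaul is the bottleneck. This requires confirming that the access and exit routing rates of Lemmas~\ref{Lem:RateISH} and~\ref{Lem:RateIMH} are at least on the order of the reduced backhaul demand in each relevant sub-regime, which is where the assumption $ml=O(n)$ and the regime boundaries from Table~\ref{Tab:RateExtended} enter. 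Once the linear dependence on $R_{\textrm{BS}}$ is established for both ISH and IMH and combined via the inner $\max$, the outer $\max$ with the two infrastructure-free schemes completes the achievability, and I would close by observing that the resulting expression reduces to the entries of Table~\ref{Tab:RateExtended} in the limit $\eta\to\infty$, giving a consistency check.
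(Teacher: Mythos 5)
Your proposal is correct and follows essentially the same route as the paper: the paper likewise forms $T_{n,\textrm{ISH}}=\Omega\left(\min\left\{ml\left(\frac{m}{n}\right)^{\alpha/2-1},mR_{\textrm{BS}}\right\}\right)$ and $T_{n,\textrm{IMH}}=\Omega\left(\min\left\{ml,m\left(\frac{n}{m}\right)^{1/2-\epsilon},mR_{\textrm{BS}}\right\}\right)$ from Lemmas~\ref{Lem:RateISH} and~\ref{Lem:RateIMH}, merges them via the identity $\max\{\min\{a,x\},\min\{b,x\}\}=\min\{\max\{a,b\},x\}$, and takes the maximum with the MH and HC throughputs. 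The only notable difference is one of emphasis: the step you flag as requiring genuine care (that the infrastructure-supported throughput degrades gracefully to $\Theta(mR_{\textrm{BS}})$, rather than merely being upper-bounded by it, when the backhaul is the bottleneck) is simply asserted in the paper, where $mR_{\textrm{BS}}$ is described as the maximum supportable rate of the backhaul links.
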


\begin{proof}
When the rate of each backhaul link is limited by $R_{\textrm{BS}}$,
from (\ref{Eq:ThroughputISH-BSunlimited}) and
(\ref{Eq:ThroughputIMH-BSunlimited}), the aggregate rates achieved
using the ISH and IMH protocols are given by
\begin{align}
    T_{n,\textrm{ISH}}=\Omega\left(\min\left\{ml\left(\frac{m}{n}\right)^{\alpha/2-1},m R_{\textrm{BS}}\right\}\right)
    \nonumber
\end{align}
and
\begin{align}
    T_{n,\textrm{IMH}}=\Omega \left( \min\left\{ ml,m\left(\frac{n}{m}\right)^{1/2-\epsilon},mR_{\textrm{BS}}\right\}
    \right), \nonumber
\end{align}
respectively, where $mR_{\textrm{BS}}$ represents the maximum
supportable rate of backhaul links. We then have
\begin{align}
    &\max\{T_{n,\textrm{ISH}},T_{n,\textrm{IMH}}\}
    \nonumber\\
    &=\Omega\left(\min\left\{
    \max\left\{ml\left(\frac{m}{n}\right)^{\alpha/2-1},
    \right.\right.\right.
    \nonumber\\
    &\left.\left.\left.
    ~~~~~~~~~~~~~~~~~~~~~~\min\left\{ ml,m\left(\frac{n}{m}\right)^{1/2-\epsilon}\right\}\right\}
    ,m R_{\textrm{BS}}\right\}\right) \nonumber
\end{align}
since $\max\{\min\{a,x\},\min\{b,x\}\}=\min\{\max\{a,b\},x\}$.
Finally, the achievable total throughput of the network is
determined by the maximum of the aggregate rates achieved by the
ISH, IMH, MH, and HC protocols, and thus is given by
\begin{align}
    T_{n}&=\max\left\{T_{n,\textrm{ISH}},T_{n,\textrm{IMH}},n^{1/2-\epsilon},n^{2-\alpha/2-\epsilon}\right\}
    \nonumber\\
    &=\Omega\left(
    \max\left\{\min\left\{
    \max\left\{ml\left(\frac{m}{n}\right)^{\alpha/2-1},
    \right.\right.\right.\right.
    \nonumber\\
    &\left.\left.
    ~~~~~~~~~~~~~~~~~~~~~~~~~~
    \min\left\{ ml,m\left(\frac{n}{m}\right)^{1/2-\epsilon}\right\}\right\}
    ,m R_{\textrm{BS}}\right\},
    \nonumber\\
    &\left.\left.
    ~~~~~~~~~~~~~~n^{1/2-\epsilon},n^{2-\alpha/2-\epsilon}\right\}
    \right),
    \nonumber
\end{align}
which completes the proof of
Theorem~\ref{Thm:AchievableRateLimited}.
\end{proof}

In the network with rate-limited infrastructure, either the MH and
HC protocol may outperform the infrastructure-supported protocols
even under certain operating regimes such that using either the ISH
or IMH protocol leads to a better throughput scaling for the
rate-unlimited infrastructure scenario. This is because the
throughput achieved by the ISH and IMH protocols can be severely
decreased when the rate $R_{\textrm{BS}}$ becomes the bottleneck.
Note that our hybrid extended network is fundamentally
power-limited~\cite{OzgurLevequeTse:07}. It is also worth noting
that the network may have either a DoF or an infrastructure
limitation, or both. In the DoF-limited regime, the performance is
limited by the number of BSs or the number of antennas per BS. On
the other hand, in the infrastructure-limited regime, the
performance is limited by the rate of backhaul links. In the
following two remarks, we show the case where our network has such
fundamental limitations (the term $\epsilon$ is omitted for
notational convenience).

\begin{remark}[DoF-limited regimes]\label{Rem:DoF-limitedRegimes}
As seen in Table~\ref{Tab:RateExtended}, in Regimes $\textrm{B}$,
$\textrm{C}$, and $\textrm{D}$, when $\alpha$ is greater than or
equal to a certain value, using the ISH or IMH protocol yields the
best throughput while the throughput scaling exponent depends on
$\beta$ or $\gamma$, or both. In other words, the performance is
limited by the number of BSs, $m$, or the number of antennas per BS,
$l$, or both. Thus, one can say that these high path-loss
attenuation regimes are {\it DoF-limited}.
\end{remark}

\begin{remark}[Infrastructure-limited regimes]\label{Rem:Infra-limitedRegimes}
Let us introduce the {\it infrastructure-limited} regime where the
performance is limited by the backhaul link rate $R_{\textrm{BS}}$;
that is, we show the case where the backhaul links become a
bottleneck. In the infrastructure-limited regime, either the ISH or
IMH protocol outperforms the other schemes while its throughput
scaling exponent depends on $\eta$. Two new operating regimes
$\tilde{\textrm{B}}$ and $\tilde{\textrm{D}}$ causing an
infrastructure limitation for some $\alpha$ are identified in
Table~\ref{Tab:RateEta}. More specifically, Regimes
$\tilde{\textrm{B}}$ and $\tilde{\textrm{D}}$ become
infrastructure-limited when $\alpha\geq 4-2\beta-2\eta$ and
$4-2\beta-2\eta \leq \alpha<2+\frac{2(\gamma-\eta)}{1-\beta}$,
respectively. These regimes are also DoF-limited since the
throughput scaling exponent is given by $\beta+\eta$ depending on
the number of BSs. In Regime $\tilde{\textrm{B}}$, the IMH protocol
is dominant when $\alpha\geq 4-2\beta-2\eta$.\footnote{For some case
in Regime $\tilde{\textrm{B}}$, the network using the ISH protocol
is also limited by the backhaul transmission, leading to the same
throughput scaling exponent $\beta+\eta$ as the IMH protocol case.}
In Regime $\tilde{\textrm{D}}$, the following interesting
observations are made according to the value of $\alpha$:
\begin{itemize}
\item (High path-loss attenuation regime) If $\alpha\geq
2+\frac{2(\gamma-\eta)}{1-\beta}$, then the network using the ISH
and IMH protocols is limited by the access and exit routings not by
the backhaul transmission, and thus achieves the same throughput as
that in Regime $\textrm{D}$.
\item (Medium path-loss attenuation regime) If $4-2\beta-2\eta \leq \alpha< 2+\frac{2(\gamma-\eta)}{1-\beta}$, then the
network using the ISH protocol is limited by the backhaul
transmission but achieves a higher throughput than those of pure ad
hoc routings, which is thus in the infrastructure-limited regime.
The network using the IMH protocol is not limited by the backhaul
transmission and thus its throughput scaling exponent is always less
than $\beta+\eta$.
\item (Low path-loss attenuation regime) If $\alpha<4-2\beta-2\eta$, neither the ISH nor IMH protocol can outperform the HC strategy since long-range MIMO transmissions of the HC yields a significant gain for small $\alpha$.
\end{itemize}

\begin{table}[t]
    \centering
    \caption{Achievability result for a hybrid extended network with finite-capacity infrastructure}
    \label{Tab:RateEta}
\begin{tabular}{|c|c|c|c|c|}
  \hline
  Regime & Range of $\eta$ & Condition & Best scheme & Throughput scaling \\ & & & & exponent $e$\\
  \noalign{\hrule height 1.2pt}
  \hline
  \multirow{2}{*}{$\tilde{\textrm{B}}$} & \multirow{2}{*}{$-\frac{1}{2}\leq\eta<\frac{1}{2}$}
                     & $2<\alpha<4-2\beta-2\eta$   & HC  & $2-\frac{\alpha}{2}$\\
                     && $\alpha\geq 4-2\beta-2\eta$ & IMH & $\beta+\eta$\\
  \hline
  \multirow{4}{*}{$\tilde{\textrm{D}}$} & \multirow{4}{*}{$0\leq\eta<1$} & $2<\alpha<4-2\beta-2\eta$  & HC & $2-\frac{\alpha}{2}$\\
                     && $ 4-2\beta-2\eta \leq \alpha<2+\frac{2(\gamma-\eta)}{1-\beta}$ & ISH & $\beta+\eta$\\
                     && $2+\frac{2(\gamma-\eta)}{1-\beta}\leq\alpha<1+\frac{2\gamma}{1-\beta}$ & ISH & $1+\gamma-\frac{\alpha(1-\beta)}{2}$\\
                     && $\alpha\geq 1+\frac{2\gamma}{1-\beta}$ & IMH & $\frac{1+\beta}{2}$\\
  \hline
\end{tabular}
\end{table}

If $\eta$ is too small, then some portions in Regimes B, C, and D
may turn into Regime A, where the pure ad hoc protocols outperform
the infrastructure-supported protocols, which will be specified
later. In these regimes, the throughput scaling is not improved even
with increasing $\eta$ and thus the network is not
infrastructure-limited. In Regimes B, C, and D, the
infrastructure-supported protocols can achieve their maximum
throughput, which is the same as the infinite-capacity backhaul link
case, since $\eta$ is sufficiently large. Hence, these three regimes
are not fundamentally infrastructure-limited.
\end{remark}

The two-dimensional operating regimes specified by $\beta$ and
$\gamma$ in Fig.~\ref{Fig:OperatingRegimesInfiniteBScapacity} can be
extended to three-dimensional operating regimes by introducing a new
scaling parameter $\eta$, where $R_{\textrm{BS}}=n^{\eta}$ for
$\eta\in(-\infty,\infty)$. Since the three-dimensional operating
regimes cannot be straightforwardly illustrated and even a
three-dimensional representation does not lead to any insight into
our analytically intractable network model, we identify the
three-dimensional operating regimes by introducing five types of
two-dimensional operating regimes, showing different
characteristics, with respect to $\beta$ and $\gamma$ according to
the value of $\eta$.

\begin{remark}[Three-dimensional operating regimes]
The operating regimes with respect to $\beta$ and $\gamma$ are
plotted in Figs.
\ref{Fig:OperatingRegimeEta-1}--\ref{Fig:OperatingRegimeEta-4} for
$\eta <-1/2$, $-1/2\leq\eta <0$, $0\leq\eta<1/2$, and
$1/2\leq\eta<1$, respectively. This result is analyzed in Appendix
\ref{Appendix-ThroughputOperatingRegimes}. In these figures, the
infrastructure-limited regimes are marked with a shaded area. Let us
closely scrutinize each case.

\begin{figure}[t!]
  \centering
  \leavevmode \epsfxsize=4.8in
  \epsffile{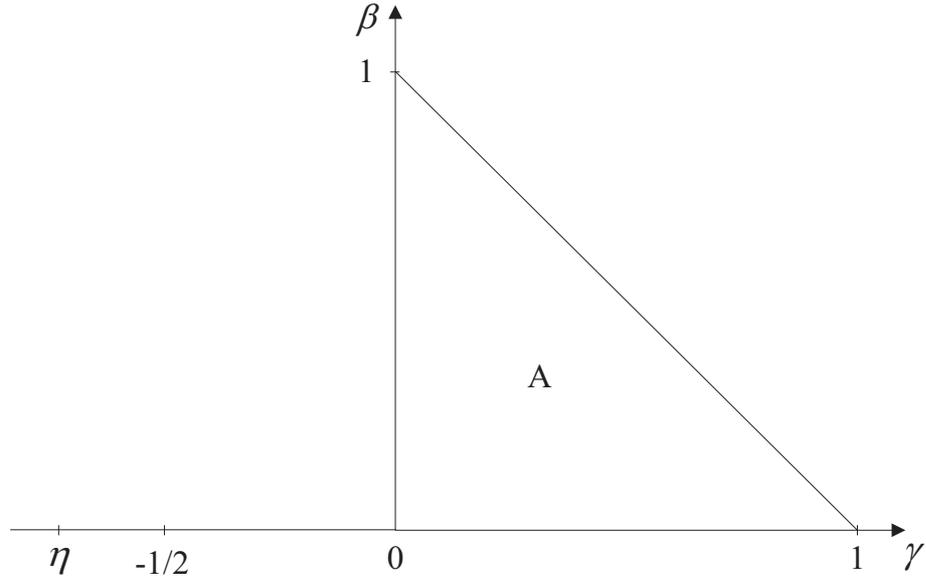}
  \caption{The operating regime with respect to $\beta$ and $\gamma$, where $\eta<-\frac{1}{2}$.}
  \label{Fig:OperatingRegimeEta-1}
\end{figure}

\begin{figure}[t!]
  \centering
  \leavevmode \epsfxsize=4.7in
  \epsffile{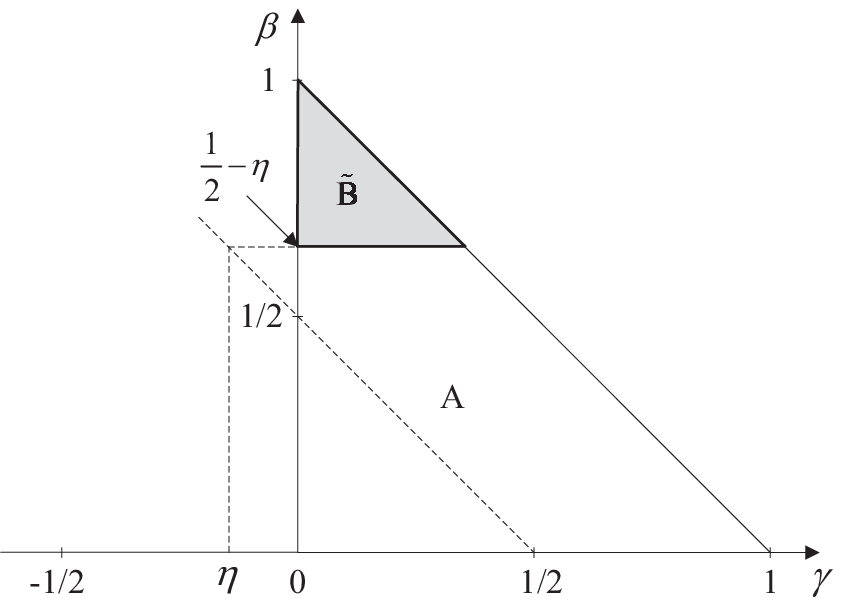}
  \caption{The operating regimes with respect to $\beta$ and $\gamma$, where $-\frac{1}{2}\leq \eta<0$.}
  \label{Fig:OperatingRegimeEta-2}
\end{figure}

\begin{figure}[t!]
  \centering
  \leavevmode \epsfxsize=3.8in
  \epsffile{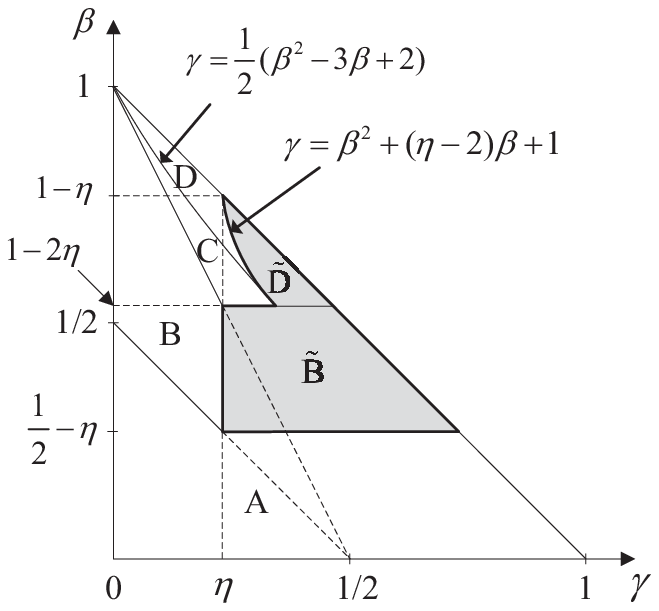}
  \caption{The operating regimes with respect to $\beta$ and $\gamma$, where $0\leq \eta <\frac{1}{2}$.}
  \label{Fig:OperatingRegimeEta-3}
\end{figure}

\begin{figure}[h]
  \centering
  \leavevmode \epsfxsize=3.6in
  \epsffile{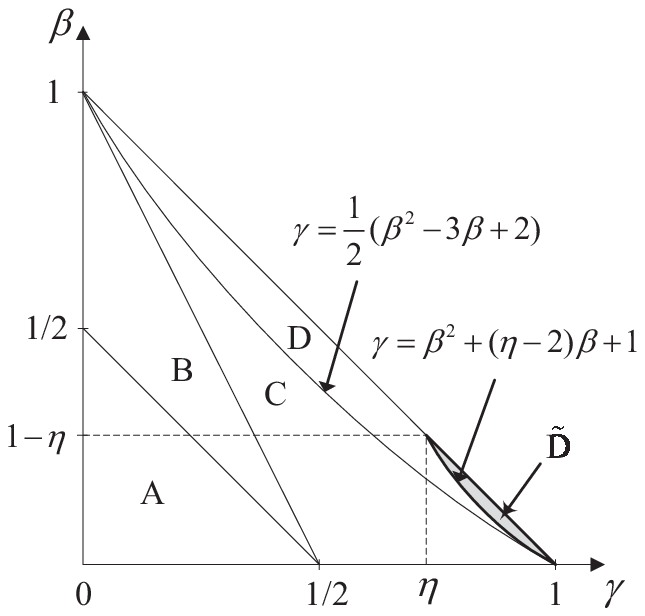}
  \caption{The operating regimes with respect to $\beta$ and $\gamma$, where $\frac{1}{2}\leq \eta <1$.}
  \label{Fig:OperatingRegimeEta-4}
\end{figure}

\begin{itemize}
\item $\eta<-\frac{1}{2}$: As shown in Fig. \ref{Fig:OperatingRegimeEta-1}, when $\eta$ is less
than $-1/2$, the entire regimes are included in Regime A. This
indicates that the infrastructure does not improve the capacity
scaling if the backhaul link rate scales slower than $1/\sqrt{n}$,
i.e., $R_{\textrm{BS}}=o(1/\sqrt{n})$.

\item $-\frac{1}{2}\leq \eta<0$: As $\eta$ becomes greater than $-1/2$, the infrastructure can
improve the capacity scaling for some cases but the network is
limited by the backhaul transmission, thereby resulting in Regime
$\tilde{\textrm{B}}$ (see Fig. \ref{Fig:OperatingRegimeEta-2}). In
Regime $\tilde{\textrm{B}}$, the HC protocol exhibits the best
performance when $\alpha$ is small. As $\alpha$ increases, the
throughput achieved by the HC protocol decreases due to the penalty
for long-range MIMO transmissions. The IMH protocol via backhaul
links then becomes dominant.

\item $0\leq \eta <\frac{1}{2}$: If $\eta$ is greater than zero, then the infrastructure-supported
protocols can fully achieve their throughput as in the network with
infinite-capacity infrastructure in Regimes B, C, and D. However, in
Regimes $\tilde{\textrm{B}}$ and $\tilde{\textrm{D}}$, the network
using either the ISH or IMH protocol is still limited by the
backhaul transmission (specifically when the IMH in Regime
$\tilde{\textrm{B}}$ or the ISH in Regime $\tilde{\textrm{D}}$ is
used). We refer to Fig.~\ref{Fig:OperatingRegimeEta-3}.

\item $\frac{1}{2}\leq \eta <1$: As
$\eta$ further increases beyond $1/2$, Regime $\tilde{\textrm{B}}$
disappears since the network using the IMH protocol is not limited
by backhaul transmission anymore and the area of Regime
$\tilde{\textrm{D}}$ gets reduced (see Fig.
\ref{Fig:OperatingRegimeEta-4}).

\item $\eta \ge 1$: As long as $\eta$ is greater than or equal to 1, i.e., $R_{\textrm{BS}}=\Omega(n)$, the network has no
infrastructure limitation at all, while achieving the same
throughput scaling as in the infinite-capacity backhaul link case.
The associated operating regimes are then illustrated in
Fig.~\ref{Fig:OperatingRegimesInfiniteBScapacity}.
\end{itemize}
\end{remark}



\section{Cut-Set Upper Bound}\label{SEC:CutSetUpperBound}

\begin{figure}[t!]
  \centering
  {
  \subfigure[The cut $L_1$]{\leavevmode
  \leavevmode \epsfxsize=0.465\textwidth
  \epsffile{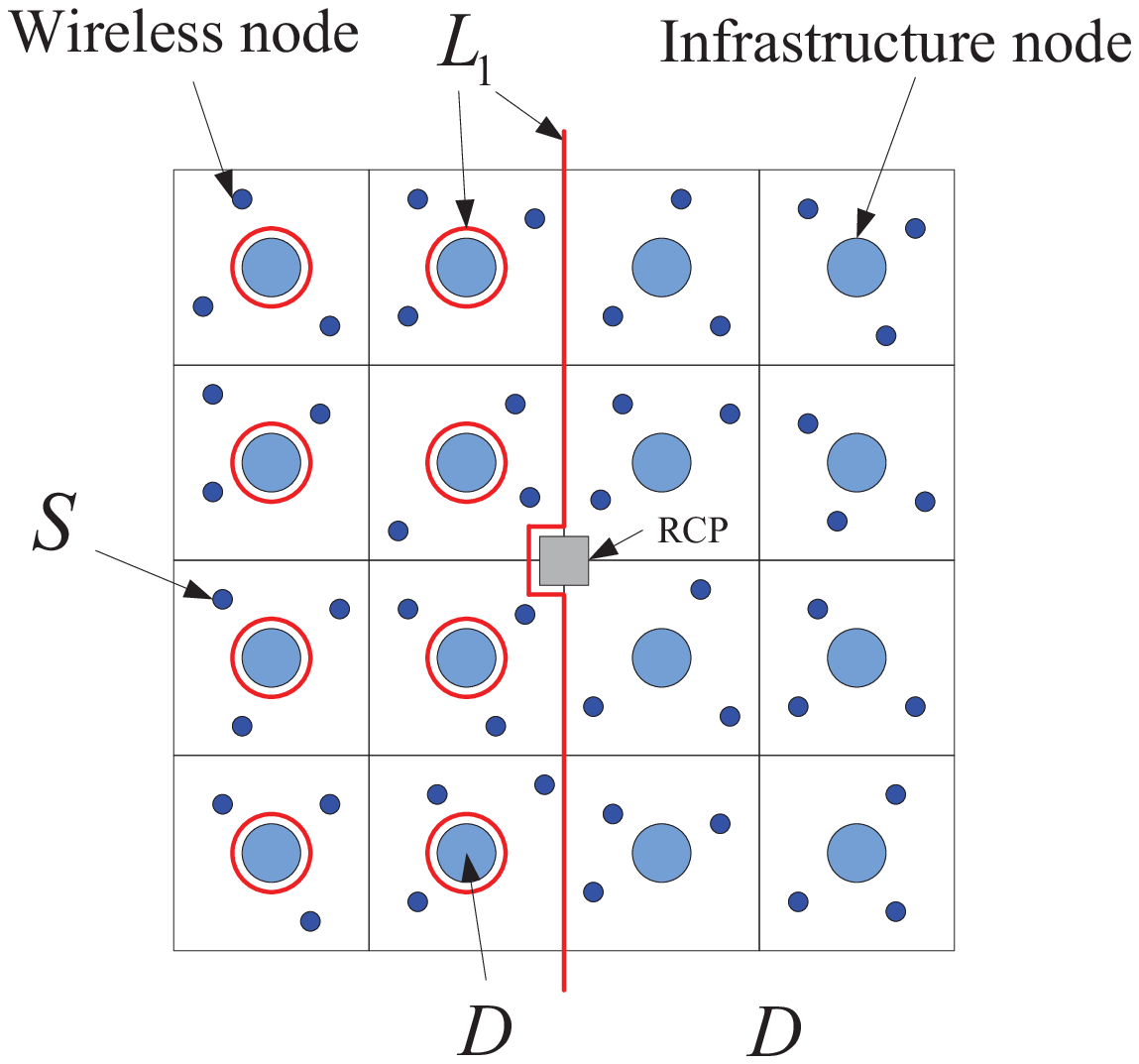} \label{FIG:Cut1}
  }
  \hspace{0.2in}
  \subfigure[The cut $L_2$]{\leavevmode
  \leavevmode \epsfxsize=0.358\textwidth
  \epsffile{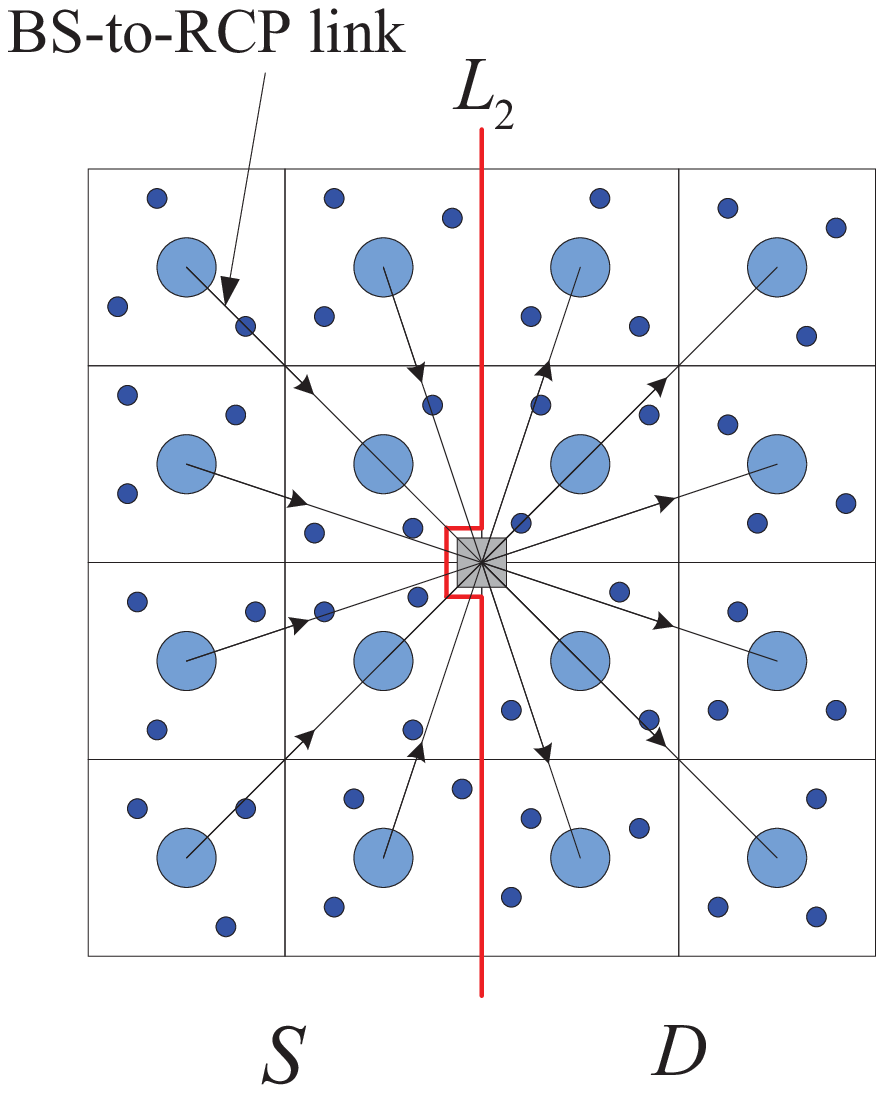} \label{FIG:Cut2}
  }
  }
  \caption{The cuts $L_1$ and $L_2$ in the hybrid extended network. The BS-to-RCP or RCP-to-BS links are not shown in (a) since they are not in effect under $L_1$.}
  \label{FIG:Cut}
\end{figure}

In this section, to see how closely our achievable scheme approaches
the fundamental limit in the hybrid network with rate-limited
BS-to-RCP (or RCP-to-BS) links, a generalized cut-set upper bound on
the aggregate capacity scaling based on the information-theoretic
approach is derived. As illustrated in Fig. \ref{FIG:Cut}, in order
to provide a tight upper bound, two cuts $L_1$ and $L_2$ are taken
into account. Similarly as
in~\cite{ShinJeonDevroyeVuChungLeeTarokh:08}, the cut $L_1$ divides
the network area into two halves by cutting the wireless connections
between wireless source nodes on the left of the network and the
other nodes, including all BS antennas and one RCP. In addition, to
fully utilize the main characteristics of the network with
finite-capacity infrastructure, we consider another cut $L_2$, which
divides the network area into another two halves by cutting the
wired connections between BSs and the RCP as well as the wireless
connections between all nodes (including BS antennas) located on the
left of the network and all nodes (including BS antennas and the
RCP) on the right.

Upper bounds obtained under the cuts $L_1$ and $L_2$ are denoted by
$T_n^{(1)}$ and $T_n^{(2)}$, respectively. By the cut-set theorem,
the total capacity is upper-bounded by
\begin{align}
T_n\le \min\left\{T_n^{(1)}, T_n^{(2)}\right\}. \label{EQ:mincut}
\end{align}
The following two lemmas will be used to derive an upper bound on
the capacity in the remainder of this section.

\begin{figure}[t!]
  \centering
  {
  \subfigure[The cut $L_1$]{\leavevmode
  \leavevmode \epsfxsize=0.35\textwidth
  \epsffile{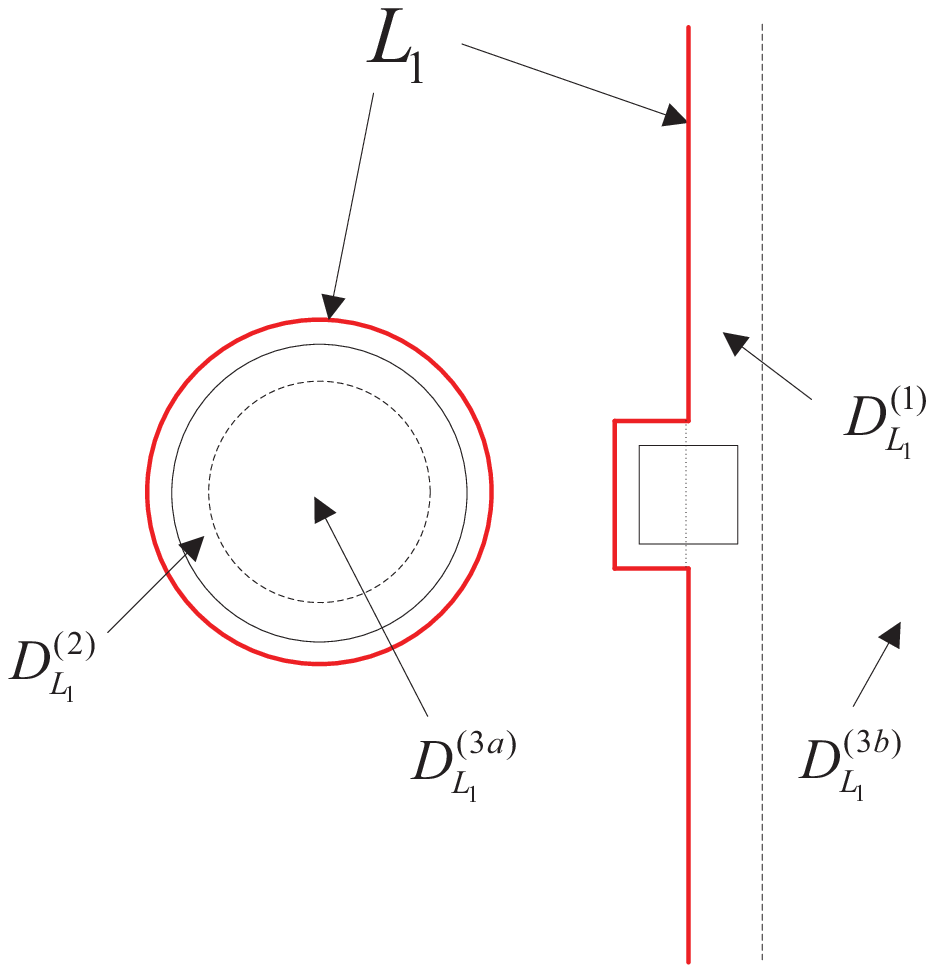} \label{FIG:CutDestination1}
  }
  \hspace{0.75in}
  \subfigure[The cut $L_2$]{\leavevmode
  \leavevmode \epsfxsize=0.38\textwidth
  \epsffile{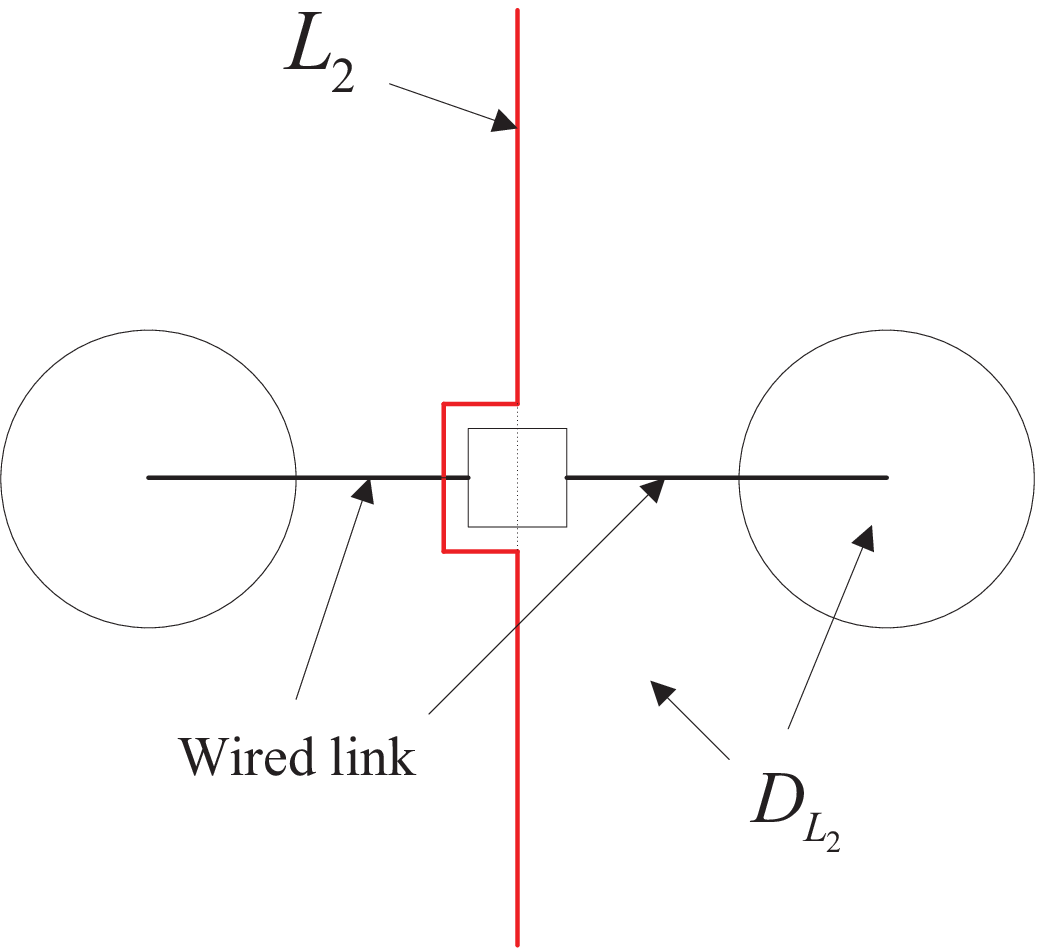} \label{FIG:CutDestination2}
  }
  }
  \caption{The partition of destinations under the cuts $L_1$ and $L_2$ in the hybrid extended network. To simplify the figure, one and two BSs are shown in (a) and (b) with the RCP, respectively.}
  \label{FIG:CutDestination}
\end{figure}

\begin{lemma}\label{Lem:minDistance}
In our two-dimensional extended network, where $n$ nodes are
uniformly distributed and there are $m$ BSs with $l$ regularly
spaced antennas, the minimum distance between any two nodes or
between a node and an antenna on the BS boundary is greater than
$1/n^{1/2+\epsilon_1}$ whp for an arbitrarily small $\epsilon_1>0$.
\end{lemma}

This lemma can be obtained by the derivation similar to that of
Lemma 6 in~\cite{ShinJeonDevroyeVuChungLeeTarokh:08}.

\begin{lemma}\label{Lem:MaxNumNodesInSquare}
Assume a two-dimensional extended network. When the network area
with the exclusion of BS area is divided into $n$ squares of unit
area, there are less than $\log n$ nodes in each square whp.
\end{lemma}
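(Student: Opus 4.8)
The plan is to prove Lemma~\ref{Lem:MaxNumNodesInSquare} by a standard union-bound argument over the $n$ unit-area squares, using the fact that the node positions are independent and uniformly distributed. First I would fix one particular unit-area square and let $X$ denote the number of the $n$ nodes falling inside it. Since the total network area is $n$ (unit node density) and each node is placed uniformly and independently, the probability that a given node lands in the fixed square is $1/n$, so $X$ is a binomial random variable $\mathrm{Bin}(n,1/n)$ with mean $\E[X]=1$. I would then apply a Chernoff-type tail bound to control the probability that $X$ exceeds $\log n$.

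Concretely, the key step is the multiplicative Chernoff bound: for a sum of independent Bernoulli trials with mean $\mu$ and any $a>\mu$, one has $\Pr[X\ge a]\le e^{-\mu}\left(\frac{e\mu}{a}\right)^{a}$, or equivalently the form $\Pr[X\ge(1+\delta)\mu]\le e^{-\mu[(1+\delta)\ln(1+\delta)-\delta]}$ already invoked in Lemma~\ref{Lem:NumberNodesCell}. Setting $\mu=1$ and $a=\log n$, I would obtain
\begin{align}
\Pr[X\ge \log n]\le e^{-1}\left(\frac{e}{\log n}\right)^{\log n}
= \exp\!\left(-1+\log n\left(1-\ln\log n\right)\right).\nonumber
\end{align}
For large $n$ the dominant term is $-\log n\,\ln\log n$, so this probability decays faster than any inverse polynomial in $n$; in particular it is $o(1/n)$.

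Next I would take a union bound over all $n$ squares. Letting $A_j$ be the event that the $j$th square contains at least $\log n$ nodes, the probability that \emph{some} square is overfull is at most $\sum_{j=1}^{n}\Pr[A_j]= n\cdot\Pr[X\ge\log n]$, which by the bound above is $n\cdot o(1/n)=o(1)$. Hence with high probability every square contains fewer than $\log n$ nodes, which is exactly the claim. The deletion of the BS area only removes some squares (or shrinks the region where nodes can land), which cannot increase any per-square count, so the bound is unaffected.

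The main obstacle—though it is more of a bookkeeping subtlety than a genuine difficulty—is ensuring that the Chernoff tail $\Pr[X\ge\log n]$ really beats the factor $n$ coming from the union bound; this hinges on the super-polynomial decay supplied by the $(\log n)^{-\log n}$ factor, i.e.\ on the fact that $\log n\cdot\ln\log n$ grows strictly faster than $\log n$. A secondary point to handle carefully is that the exact tail one needs is the \emph{right} tail at a threshold $a=\log n$ that is itself growing (rather than a fixed multiple of the mean $\mu=1$), so I would state the Chernoff bound in the threshold form $\Pr[X\ge a]\le (e\mu/a)^a e^{-\mu}$ valid for all $a>\mu$ rather than the fixed-$\delta$ form. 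With that form in hand the argument closes immediately.
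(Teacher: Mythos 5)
Your proof is correct and is essentially the paper's own argument: the paper simply defers to \cite[Lemma 1]{FranceschettiDouseTseThiran:07}, whose proof is exactly this Chernoff-plus-union-bound computation, adapted (as you do) from the Poisson node model to the binomial one, so your write-up just makes the "slight modification" explicit. The only slip is in your closing remark about the BS area: excluding that area shrinks the region available to nodes and therefore \emph{increases} the per-square occupancy probability to $1/(n-A_{\textrm{BS}})$ rather than leaving it at $1/n$ (the counts are stochastically larger, not smaller); the argument still closes because the excluded area is at most a constant fraction of $n$, so the per-square mean stays $O(1)$ and the tail bound $e^{-\mu}\left(e\mu/\log n\right)^{\log n}$ remains super-polynomially small, beating the union-bound factor $n$.
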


The proof of Lemma~\ref{Lem:MaxNumNodesInSquare} is given by
slightly modifying the proof of in~\cite[Lemma
1]{FranceschettiDouseTseThiran:07}. Let us first focus on the cut
$L_1$. Let $S_{L_1}$ and $D_{L_1}$ denote the sets of sources and
destinations, respectively, for $L_1$ in the network. Then, all
wireless ad hoc nodes on the left half of the network are $S_{L_1}$,
while all ad hoc nodes on the right half and all BS antennas in the
network are destinations $D_{L_1}$ (see Fig.~\ref{FIG:Cut1}). Note
that the wired BS-to-RCP (or RCP-to-BS) links do not need to be
considered under $L_1$ since all BSs in the network act as
destinations $D_{L_1}$. In this case, the $\frac{n}{2}\times
\left(\frac{n}{2}+ml\right)$ MIMO channel between the two sets of
nodes and BSs separated by the cut $L_1$ is formed. The total
throughput $T_n^{(1)}$ for sources on the left half is bounded by
the capacity of the MIMO channel between $S_{L_1}$ and $D_{L_1}$,
and thus is given by
\begin{align*}
    T_n^{(1)}&\leq \underset{\mathbf{Q}_{L_1}\geq 0}\max~E\left[\log\det\left(\mathbf{I}_{\frac{n}{2}+ml}+\mathbf{H}_{L_1}\mathbf{Q}_{L_1}\mathbf{H}_{L_1}^\dag\right)\right]
    \nonumber\\
    &=\underset{\mathbf{Q}_{L_1}\geq 0}\max~ E[\log\det(\mathbf{I}_{\Theta(n)}+\mathbf{H}_{L_1}\mathbf{Q}_{L_1}\mathbf{H}_{L_1}^\dag)]
\end{align*}
where the equality comes from the fact that
$n=\Omega(ml)$.\footnote{Here and in the sequel, the noise variance
is assumed to be one to simplify the notation.} Here, the channel
matrix $\mathbf{H}_{L_1}$ consists of the uplink channel vectors
$\mathbf{h}_{bi}^{(u)}$  in (\ref{EQ:uplinkCH}) for $i\in S_{L_1}$,
$b\in B$, and $h_{ki}$ in (\ref{EQ:nodeCH}) for $i\in S_{L_1}$,
$k\in D_r$, where $B$ and $D_r$ denotes the set of BSs in the
network and the set of wireless nodes on the right half,
respectively. The matrix $\mathbf{Q}_{L_1}$ is the positive
semidefinite input covariance matrix whose $k$th diagonal element
satisfies $[\mathbf{Q}_{L_1}]_{kk}\leq P$ for $k\in S_{L_1}$. In
order to obtain a tight upper bound, it is necessary to narrow down
the class of S--D pairs according to their Euclidean distances. As
illustrated in Fig.~\ref{FIG:CutDestination1}, the set $D_{L_1}$ is
partitioned into the following four groups according to their
locations: $D_{L_1}^{(1)}$, $D_{L_1}^{(2)}$, $D_{L_1}^{(3a)}$, and
$D_{L_1}^{(3b)}$. The set $D_{L_1}^{(3a)}\cup D_{L_1}^{(3b)}$ is
denoted by $D_{L_1}^{(3)}$. The sets $D_{L_1}^{(1)}$ and
$D_{L_1}^{(2)}$ represent the sets of destinations located on the
rectangular slab with width one immediately to the right of the
centerline (cut) $L_1$ including the RCP and on the ring with width
one immediately inside each BS boundary (cut) on the left half,
respectively. The set $D_{L_1}^{(3)}$ is given by
$D_{L_1}\setminus(D_{L_1}^{(1)} \cup D_{L_1}^{(2)})$. By generalized
Hadamard's inequality~\cite{ConstantinescuScharf:98} as
in~\cite{OzgurLevequeTse:07,JovicicViswanathKulkarni:04}, we then
have
\begin{align}\label{Eq:ThreeTermsUpperBoundL1}
    T_n^{(1)}&\leq\underset{\mathbf{Q}_{L_1}\geq 0}\max~ E\left[\log\det\left(\mathbf{I}_{\sqrt{n}\log n+1}+\mathbf{H}_{L_1}^{(1)}\mathbf{Q}_{L_1}\mathbf{H}_{L_1}^{(1)\dag}\right)\right]
    \nonumber\\
    &~~~+\underset{\mathbf{Q}_{L_1}\geq 0}\max~ E\left[\log\det\left(\mathbf{I}_{O(\sqrt{mn})}+\mathbf{H}_{L_1}^{(2)}\mathbf{Q}_{L_1}\mathbf{H}_{L_1}^{(2)\dag}\right)\right]
    \nonumber\\
    &~~~+\underset{\mathbf{Q}_{L_1}\geq 0}\max~ E\left[\log\det\left(\mathbf{I}_{\Theta(n)}+\mathbf{H}_{L_1}^{(3)}\mathbf{Q}_{L_1}\mathbf{H}_{L_1}^{(3)\dag}\right)\right].
\end{align}
where $\mathbf{H}_{L_1}^{(t)}$ is the matrix with entries
$\left[\mathbf{H}_{L_1}^{(t)}\right]_{ki}$ for $i\in S_{L_1}$, $k\in
D_{L_1}^{(t)}$, and $t=1,2,3$. By analyzing either the sum of the
capacities of the multiple-input single-output (MISO) channel or the
amount of power transferred across the network, an upper bound for
each term in (\ref{Eq:ThreeTermsUpperBoundL1}) can be derived. We
now establish the following lemma, which shows an upper bound under
the cut $L_1$.

\begin{lemma}\label{Lem:UpperBoundL1}
Under the cut $L_1$ in Fig. \ref{FIG:CutDestination1}, an upper
bound on the aggregate capacity, $T_n^{(1)}$, of the hybrid network
with rate-limited infrastructure is given by
\begin{align}\label{Eq:UpperBoundL1}
    &T_n^{(1)}= O\left(n^{\epsilon}\max \left\{ml\left(\frac{m}{n}\right)^{\alpha/2-1},m\min\left\{l,\sqrt{\frac{n}{m}}\right\},
    \right.\right.
    \nonumber\\
    &\left.\left.
    ~~~~~~~~~~~~\sqrt{n},n^{2-\alpha/2}\right\}\right),
\end{align}
where $\epsilon>0$ is an arbitrarily small constant.
\end{lemma}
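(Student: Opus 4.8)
The plan is to bound each of the three terms in (\ref{Eq:ThreeTermsUpperBoundL1}) separately, using two complementary techniques: bounding the sum of MISO capacities for terms where the number of receive nodes is small, and bounding the total received power (i.e.\ using $\log\det(\mathbf{I}+\mathbf{A}\mathbf{A}^\dag)\le \rank\cdot\log(1+\|\cdot\|)$ or the trace bound) for terms where the received power is the limiting factor. The key preliminary facts are Lemma~\ref{Lem:minDistance}, which lower-bounds all pairwise distances by $n^{-1/2-\epsilon_1}$ and thus controls the largest possible channel gain, and Lemma~\ref{Lem:MaxNumNodesInSquare}, which caps the number of nodes in any unit square by $\log n$ and thus controls how many sources can sit close to a given receiver. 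Both enter as polynomial-in-$n^\epsilon$ slack, consistent with the $n^\epsilon$ factor in (\ref{Eq:UpperBoundL1}).

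First I would handle the term associated with $D_{L_1}^{(1)}$, the destinations in the width-one slab just right of the cut (together with the RCP). Here there are $\Theta(\sqrt{n}\log n)$ receivers, so I would bound $E[\log\det(\cdots)]$ by the sum of the individual MISO capacities, each of which is $O(\log n)$ after accounting for the path loss and the fact (from Lemma~\ref{Lem:minDistance}) that no source is closer than $n^{-1/2-\epsilon_1}$; this yields an $O(n^{1/2+\epsilon})$ contribution, matching the $\sqrt{n}$ term. Second, for $D_{L_1}^{(2)}$, the rings of width one just inside each BS boundary on the left, the number of such receive antennas scales as $O(\sqrt{mn})$ (there are $m$ BSs, each contributing $O(\sqrt{n/m})$ boundary points in the ring), and the same MISO-sum argument gives $O(\sqrt{mn}\,n^\epsilon)$, which is dominated by $m\min\{l,\sqrt{n/m}\}\le m\sqrt{n/m}=\sqrt{mn}$ and by the other listed terms. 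Third, and most delicately, for the bulk set $D_{L_1}^{(3)}$ I would switch to a power-transfer argument: since every node in $D_{L_1}^{(3)}$ is at distance at least one from the cut and from every BS boundary, the total power received across $L_1$ is controlled, and $\log\det$ is bounded by the rank times the log of the total received power, splitting into the wireless-node contribution (giving the $n^{2-\alpha/2}$ term via the standard extended-network power computation) and the BS-antenna contribution (giving the $ml(m/n)^{\alpha/2-1}$ term, since the $ml$ antennas collectively harvest power scaling like $l(m/n)^{\alpha/2-1}$ per BS).

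The hard part will be the $D_{L_1}^{(3)}$ term, specifically making the MIMO-to-power-bound reduction rigorous while simultaneously tracking two qualitatively different receiver populations (ordinary wireless nodes versus the clustered BS antennas) whose path-loss profiles differ. The subtlety is that generalized Hadamard's inequality~\cite{ConstantinescuScharf:98} has already been used to decouple the three groups, but within the BS-antenna contribution the antennas are tightly packed (spacing much smaller than one), so the per-antenna received power must be summed carefully over the antenna geometry rather than treated as independent unit-area receivers; this is exactly where the antenna-configuration assumption and the $(m/n)^{\alpha/2-1}$ distance scaling from the cut to the left-half BSs must be invoked. Once each of the three terms is shown to be $O(n^\epsilon)$ times one of the four quantities inside the maximum, combining them via the triangle-type bound on a sum of three nonnegative terms (the sum is at most three times the maximum) yields (\ref{Eq:UpperBoundL1}), completing the proof of Lemma~\ref{Lem:UpperBoundL1}.
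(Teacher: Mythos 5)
Your decomposition and the techniques for the first and third groups track the paper's proof: the same Hadamard split into $D_{L_1}^{(1)}$, $D_{L_1}^{(2)}$, $D_{L_1}^{(3)}$, the same MISO-sum argument with Lemmas~\ref{Lem:minDistance} and \ref{Lem:MaxNumNodesInSquare} giving the $O(n^{1/2+\epsilon})$ term, and the same power-transfer treatment of the bulk set (the paper additionally splits that step into the cases $l=o(\sqrt{n/m})$, where $D_{L_1}^{(3a)}$ is empty and the pure ad hoc bound applies, and $l=\Omega(\sqrt{n/m})$, where the BS-antenna power transfer yields $nl(m/n)^{\alpha/2}=ml(m/n)^{\alpha/2-1}$; you would need that case split to make your sketch rigorous, but it is the same idea).

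However, your treatment of $D_{L_1}^{(2)}$ has a genuine error. You count $O(\sqrt{n/m})$ ring antennas per BS, conclude $T_{n,2}^{(1)}=O(\sqrt{mn}\,n^{\epsilon})$, and then assert this is ``dominated by $m\min\{l,\sqrt{n/m}\}\le m\sqrt{n/m}=\sqrt{mn}$'' --- but that inequality runs the wrong way: it shows your bound dominates the lemma's term, not that the lemma's term absorbs your bound. This is not merely cosmetic. When $l=o(\sqrt{n/m})$ the antenna-placement assumption puts all $l$ antennas on the BS boundary, so the ring contains only $l$ antennas per BS and the correct count is $m\min\{l,\sqrt{n/m}\}=ml\ll\sqrt{mn}$; with, say, $m=n^{0.9}$, $l=n^{0.01}$, $\alpha=4$, the lemma's right-hand side is $O(n^{0.91+\epsilon})$ while your bound for this single group is $O(n^{0.95+\epsilon})$, which also exceeds $\sqrt{n}$ and $n^{2-\alpha/2}$, so your argument proves a strictly weaker statement than (\ref{Eq:UpperBoundL1}) in that regime. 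The fix is exactly the paper's step: invoke the antenna configuration to count $\min\{l,\sqrt{n/m}\}$ antennas per BS in the width-one ring, which with the per-antenna $O(\log n)$ MISO bound gives $T_{n,2}^{(1)}\le c_3\, m\min\left\{l,\sqrt{n/m}\right\}\log n$, matching the second term in the maximum.
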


\begin{proof}
Let $T_{n,1}^{(1)}$, $T_{n,2}^{(1)}$, and $T_{n,3}^{(1)}$ denote the
first to third terms in (\ref{Eq:ThreeTermsUpperBoundL1}),
respectively. By further applying generalized Hadamard's
inequality~\cite{ConstantinescuScharf:98}, the term $T_{n,1}^{(1)}$
is upper-bounded by
\begin{align}
    T_{n,1}^{(1)}&\leq \sum_{k\in D_{L_1}^{(1)}}\log \left(1+P\sum_{i\in S_{L_1}}|h_{ki}|^2\right)\nonumber\\
    &\leq c_1 (\sqrt{n}\log n+1)\log n
    \leq c_2 n^{\epsilon}\sqrt{n} \label{EQ:Tn1}
\end{align}
where $c_1$ and $c_2$ are some positive constants, independent of
$n$, and $\epsilon>0$ is an arbitrarily small constant. The second
inequality follows from the fact that the minimum distance between
any source and destination (including the RCP) is greater than
$1/n^{1/2+\epsilon_1}$ whp for an arbitrarily small $\epsilon_1>0$
by Lemma~\ref{Lem:minDistance} and there exist no more than
$\sqrt{n}\log n +1 $ nodes in $D_{L_1}^{(1)}$ whp by
Lemma~\ref{Lem:MaxNumNodesInSquare}.
Since the remaining terms (i.e., the second and third terms) in
(\ref{Eq:ThreeTermsUpperBoundL1}) can be computed regardless of the
presence of the RCP, they are derived by basically following the
same approach as that in~\cite{ShinJeonDevroyeVuChungLeeTarokh:08}.
From the antenna configuration in our network, the second term in
(\ref{Eq:ThreeTermsUpperBoundL1}), $T_{n,2}^{(1)}$, is bounded
by~\cite{ShinJeonDevroyeVuChungLeeTarokh:08}
\begin{align}
    T_{n,2}^{(1)}\leq c_3 m\min\left\{l,\sqrt{\frac{n}{m}}\right\}\log n
    \label{EQ:Tn2}
\end{align}
where $c_3>0$ is some constant independent of $n$. An upper bound
for the third term in (\ref{Eq:ThreeTermsUpperBoundL1}),
$T_{n,3}^{(1)}$, is derived now. If $l=o(\sqrt{n/m})$, there is no
destination in $D_{L_1}^{(3a)}$ (see Fig.~\ref{FIG:CutDestination1})
and thus the information transfer to the set $D_{L_1}^{(3)}$ is the
same as that in wireless network with no infrastructure (that is,
the information transfer to the set $D_{L_1}^{(3b)}$). Hence, it
follows that~\cite{OzgurLevequeTse:07}
\begin{align}
    T_{n,3}^{(1)} = O\left(\max\left\{n^{2-\alpha/2+\epsilon},n^{1/2+\epsilon}\right\}\right)\label{EQ:Tn31}
\end{align}
If $l=\Omega(\sqrt{n/m})$, using an upper bound for the power
transfer from the set $S_{L_1}$ to the set $D_{L_1}^{(3a)}$, the
term $T_{n,3}^{(1)}$ is upper-bounded
by~\cite{ShinJeonDevroyeVuChungLeeTarokh:08}
\begin{align}
    T_{n,3}^{(1)}&\leq \left\{
    \begin{array}{ll}
        c_4 n^{\epsilon}\max\left\{n^{2-\alpha/2},nl\left(\frac{m}{n}\right)^{\alpha/2}\right\} & \textrm{if~}2<\alpha<3\\
        c_4 n^{\epsilon}\max\left\{\sqrt{n},\frac{n}{\sqrt{l}}\left(\frac{ml}{n}\right)^{\alpha/2}\right\} & \textrm{if~}\alpha\geq 3
    \end{array}
    \right. \label{EQ:Tn32}
\end{align}
where $c_4>0$ is some constant independent of $n$. Using
(\ref{EQ:Tn1})--(\ref{EQ:Tn32}) finally yields
(\ref{Eq:UpperBoundL1}), which completes the proof of the lemma.
\end{proof}

The upper bound $T_n^{(1)}$ matches the achievable throughput
scaling within a factor of $n^\epsilon$ in the network with
infinite-capacity infrastructure ($\eta\rightarrow\infty$), which
indicates that $T_n^{(1)}$ does not rely on the parameter $\eta$.
Note that the first to fourth terms in the max operation of
(\ref{Eq:UpperBoundL1}) represent the amount of information
transferred to the destination sets $D_{L_1}^{(3a)}$,
$D_{L_1}^{(2)}$, $D_{L_1}^{(1)}$, and $D_{L_1}^{(3b)}$,
respectively. The third and fourth terms characterize the cut-set
upper bound of wireless networks with no infrastructure. By
employing infrastructure nodes, it is possible to get an additional
information transfer for a given cut $L_1$, corresponding to the
first and second terms in the max operation of
(\ref{Eq:UpperBoundL1}).

In addition to the cut $L_1$, we now turn to the cut $L_2$ in
Fig.~\ref{FIG:Cut2} to obtain a tight cut-set upper bound in the
network with rate-limited infrastructure. Let $S_{L_2}$ and
$D_{L_2}$ denote the sets of sources and destinations, respectively,
for $L_2$ in the network. More precisely, under $L_2$, all the
wireless and infrastructure nodes on the left half are $S_{L_2}$,
while all the nodes on the right half including the RCP are included
in $D_{L_2}$ (see Fig.~\ref{FIG:Cut2}). Unlike the case of $L_1$, we
take into account information flows over the wired connections as
well as the wireless connections. In consequence, an upper bound on
the aggregate capacity is established based on using the min-cut of
our hybrid network, and is presented in the following theorem.

\begin{theorem}\label{Thm:TotalThroughputUpperBound}
In the hybrid network with the backhaul link rate $R_{\textrm{BS}}$,
the aggregate throughput $T_n$ is upper-bounded by
\begin{align}\label{Eq:UpperBound}
    &O\left(
    \max\left\{\min\left\{
    \max\left\{n^{\epsilon}ml\left(\frac{m}{n}\right)^{\alpha/2-1},
    \right.\right.\right.\right.
    \nonumber\\
    &\left.\left.
    ~~~~~~~~~~~~~~~~~~~~~~~~~~
    n^{\epsilon}m\min\left\{ l,\sqrt{\frac{n}{m}}\right\}\right\}
    ,m R_{\textrm{BS}}\right\},
    \nonumber\\
    &\left.\left.
    ~~~~~~~~~~~n^{1/2+\epsilon},n^{2-\alpha/2+\epsilon}\right\}
    \right),
\end{align}
where $\epsilon>0$ is an arbitrarily small constant.\footnote{To
simplify notations, the terms including $\epsilon$ are omitted if
dropping them does not cause any confusion.}
\end{theorem}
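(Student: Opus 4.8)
The plan is to feed both cut-set bounds into (\ref{EQ:mincut}), namely $T_n\le\min\{T_n^{(1)},T_n^{(2)}\}$. Since $T_n^{(1)}$ is already supplied by Lemma~\ref{Lem:UpperBoundL1}, the substantive step is to establish a companion bound $T_n^{(2)}$ under the cut $L_2$ and then to show that the minimum of the two collapses to (\ref{Eq:UpperBound}). Writing $P=\max\{ml(m/n)^{\alpha/2-1},\,m\min\{l,\sqrt{n/m}\}\}$, $Q=\max\{\sqrt{n},\,n^{2-\alpha/2}\}$, and $R=mR_{\textrm{BS}}$, Lemma~\ref{Lem:UpperBoundL1} reads $T_n^{(1)}=O(n^{\epsilon}\max\{P,Q\})$; I aim to prove $T_n^{(2)}=O(n^{\epsilon}\max\{R,Q\})$, after which a purely algebraic step finishes the argument.

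To bound $T_n^{(2)}$, observe that under $L_2$ the two halves are joined by two independent resources: the $\Theta(m)$ errorless wired BS-to-RCP links that cross the cut, and the wireless MIMO channel between the left nodes $S_{L_2}$ (including left BS antennas) and the right nodes $D_{L_2}$ (including right BS antennas and the RCP). By the cut-set theorem the mutual information across $L_2$ is at most the sum of the two. The wired part is immediate: each crossing link carries at most $R_{\textrm{BS}}=n^{\eta}$ and only $\Theta(m)$ of them cross, giving $O(mR_{\textrm{BS}})=O(R)$.

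The wireless part is the crux. I would partition $D_{L_2}$ by distance from the cut and apply generalized Hadamard's inequality~\cite{ConstantinescuScharf:98} exactly as in the $L_1$ analysis, invoking Lemmas~\ref{Lem:minDistance} and~\ref{Lem:MaxNumNodesInSquare} for the slab immediately to the right of $L_2$ and the power-transfer argument of~\cite{OzgurLevequeTse:07,ShinJeonDevroyeVuChungLeeTarokh:08} for the bulk. The essential geometric difference from $L_1$ is that here only the right-half BSs are destinations, and each such BS lies at distance $\Omega(\sqrt{n/m})$---a full cell width---from $L_2$, so the power it collects from the left sources is attenuated by $(n/m)^{-\alpha/2}$ per link rather than by the $O(1)$ factor available when the cut grazes a BS boundary under $L_1$. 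Consequently the infrastructure-aided wireless crossing is order-wise dominated, and the wireless part reduces to the pure ad hoc cut-set bound $O(n^{\epsilon}Q)$. Adding the wired term yields $T_n^{(2)}=O(n^{\epsilon}\max\{R,Q\})$, as desired.

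It then remains to combine the two bounds. Using the distributive law of the $(\max,\min)$ lattice,
\begin{align*}
    \min\{\max\{P,Q\},\max\{R,Q\}\}=\max\{\min\{P,R\},Q\},
\end{align*}
so that $\min\{T_n^{(1)},T_n^{(2)}\}=O(n^{\epsilon}\max\{\min\{P,R\},Q\})$, which is precisely (\ref{Eq:UpperBound}). The main obstacle is the wireless part of $T_n^{(2)}$: one must make rigorous, through the power-transfer bound, that displacing the cut a full cell width away from every destination BS suppresses the infrastructure term, so that $T_n^{(2)}$ is controlled solely by the backhaul rate $mR_{\textrm{BS}}$ together with the ad hoc terms $\sqrt{n}$ and $n^{2-\alpha/2}$.
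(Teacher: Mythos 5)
Your proposal is correct and follows essentially the same route as the paper: the same two cuts combined via $T_n\le\min\{T_n^{(1)},T_n^{(2)}\}$, the same decomposition $T_n^{(2)}\le T^{(2)}_{n,\textrm{wireless}}+T^{(2)}_{n,\textrm{wired}}$ with $T^{(2)}_{n,\textrm{wired}}=O(mR_{\textrm{BS}})$ from the $\Theta(m)$ crossing backhaul links, the reduction of $T^{(2)}_{n,\textrm{wireless}}$ to the pure ad hoc bound $O\left(n^{\epsilon}\max\left\{\sqrt{n},n^{2-\alpha/2}\right\}\right)$ via the power-transfer technique of \cite{OzgurLevequeTse:07}, and the identical $\min$--$\max$ distributive identity to finish. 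The only cosmetic difference is in how the suppression of the infrastructure term under $L_2$ is justified: you argue that every destination BS antenna sits $\Theta(\sqrt{n/m})$ away from the cut, whereas the paper observes that, once the left BS antennas are moved to the source side, the wireless channel is equivalent in scaling to a classical $\Theta(n)\times\Theta(n)$ ad hoc MIMO channel---both amount to the same estimate.
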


\begin{proof}
By the min-cut of the network, the aggregate capacity is
upper-bounded by (\ref{EQ:mincut}). The upper bound on the capacity,
$T_n^{(1)}$, under the cut $L_1$ directly follows from Lemma
\ref{Lem:UpperBoundL1}.

An upper bound on the capacity, $T_n^{(2)}$, under the cut $L_2$ is
derived below. Since the cut-set argument under $L_1$ does not
utilize the main characteristics of rate-limited backhaul links, we
now deal with a new cut $L_2$, which divides the network into two
equal halves. A wired link between two BSs that lie on the opposite
side of each other from the cut is illustrated in Fig.
\ref{FIG:CutDestination2}. In this case, we get the $\left(\frac{n +
ml}{2}\right) \times \left(\frac{n + ml}{2}+1\right)$ MIMO {\it
wireless} channel and the $\frac{m}{2} \times 1$ MISO {\it wired}
channel between the two sets $S_{L_2}$ and $D_{L_2}$ separated by
$L_2$. We now derive the amount of information transferred by each
channel. Let $T^{(2)}_{n,\textrm{wireless}}$ and
$T^{(2)}_{n,\textrm{wired}}$ denote the amount of information
transferred through the wireless and wired channels under $L_2$,
respectively. The aggregate capacity obtained under $L_2$ is then
upper-bounded by $T^{(2)}_n \leq
T^{(2)}_{n,\textrm{wireless}}+T^{(2)}_{n,\textrm{wired}}$.

Let us first focus on deriving $T^{(2)}_{n,\textrm{wireless}}$,
which is bounded by the capacity of the MIMO channel between
$S_{L_2}$ and $D_{L_2}$. It thus follows that
\begin{align*}
    T^{(2)}_{n,\textrm{wireless}} &\leq \underset{\mathbf{Q}_{L_2}\geq 0}\max~ E\left[\log\det\left(\mathbf{I}_{\frac{n+ml}{2}+1}+\mathbf{H}_{L_2}\mathbf{Q}_{L_2}\mathbf{H}_{L_2}^\dag\right)\right]
    \nonumber\\
    &=\underset{\mathbf{Q}_{L_2}\geq 0}\max~
    E[\log\det(\mathbf{I}_{\Theta(n)}+\mathbf{H}_{L_2}\mathbf{Q}_{L_2}\mathbf{H}_{L_2}^\dag)],
\end{align*}
where the equality comes from $n=\Omega(ml)$. Using the same power
transfer argument as~\cite{OzgurLevequeTse:07}, an upper bound on
$T^{(2)}_{n,\textrm{wireless}}$ is derived in the following lemma.

\begin{lemma}
Under the cut $L_2$ in Fig. \ref{FIG:Cut2}, an upper bound
$T^{(2)}_{n,\textrm{wireless}}$ on the capacity of the
$\left(\frac{n + ml}{2}\right) \times \left(\frac{n +
ml}{2}+1\right)$ MIMO wireless channel is given by
\begin{align}
    T^{(2)}_{n,\textrm{wireless}} = O\left(n^\epsilon \max\left\{ \sqrt{n}, n^{2-\alpha/2}
    \right\}\right), \nonumber
\end{align}
where $\epsilon>0$ is an arbitrarily small constant.
\end{lemma}

This result can be obtained by straightforwardly applying the proof
technique in \cite{OzgurLevequeTse:07}. This is because the above
problem turns into showing the capacity of the classical $n \times
n$ MIMO wireless channel formed in pure ad hoc networks due to the
fact that $\frac{n + ml}{2}=\Theta(n)$ and the minimum distance
between a node and an antenna on the BS boundary is the same as the
minimum distance between any two nodes, which thus does not
essentially change the result in terms of scaling law.

Next, we turn to analyzing the capacity of the MISO wired channel.
There also exist wired links between the two sets separated by
$L_2$, i.e., BSs on the left half of the network and the RCP placed
at the center of the network. From the fact that the $\frac{m}{2}$
BS-to-RCP (or RCP-to-BS) links can be created under $L_2$, we have
\begin{align}
T^{(2)}_{n,\textrm{wired}} = O(m R_{\textrm{BS}} ) \nonumber
\end{align}
since the capacity of each backhaul link is assumed to be
$R_{\textrm{BS}}$. In consequence, under the cut $L_2$, it follows
that
\begin{align} \label{Eq:UpperBoundL2}
    T^{(2)}_n &\le T^{(2)}_{n,\textrm{wireless}} +
    T^{(2)}_{n,\textrm{wired}}
    \nonumber\\&
    = O\left(\max\left\{ m R_{\textrm{BS}}, n^{1/2+\epsilon},n^{2-\alpha/2+\epsilon}
    \right\}\right).
\end{align}

Using (\ref{Eq:UpperBoundL1}) and (\ref{Eq:UpperBoundL2}), the total
throughput $T_n$ is finally upper-bounded by
\begin{align}
    T_n
    \le \min\{T_n^{(1)},T_n^{(2)}\}
    \nonumber
\end{align}
\begin{align}
    &= O\left(\min\left\{n^{\epsilon}\max \left\{ml\left(\frac{m}{n}\right)^{\alpha/2-1},m\min\left\{l,\sqrt{\frac{n}{m}}\right\},
    \right.\right.\right.
    \nonumber\\
    &\left.\left.\left.
    ~~~~~~~~~~~~\sqrt{n},n^{2-\alpha/2}\right\},
    \max\{m R_{\textrm{BS}},n^{1/2+\epsilon},n^{2-\alpha/2+\epsilon}\}\right\}\right)
    \nonumber\\
    &=O\left(\max\left\{\min\left\{\max \left\{n^{\epsilon}ml\left(\frac{m}{n}\right)^{\alpha/2-1},
    \right.\right.\right.\right.
    \nonumber\\
    &\left.\left.\left.\left.
    ~~~~~~n^{\epsilon}m\min\left\{l,\sqrt{\frac{n}{m}}\right\}
    \right\},m R_{\textrm{BS}}\right\},
    n^{1/2+\epsilon},n^{2-\alpha/2+\epsilon}\right\}\right),
    \nonumber
\end{align}
where the last equality follows from
$\min\{\max\{a,x\},\max\{b,x\}\}=\max\{\min\{a,b\},x\}$. This
completes the proof of Theorem~\ref{Thm:TotalThroughputUpperBound}.
\end{proof}

The relationship between the achievable throughput and the cut-set
upper bound is now examined as follows.

\begin{remark}
The upper bound in (\ref{Eq:UpperBound}) matches the achievable
throughput scaling in Theorem \ref{Thm:AchievableRateLimited} within
$n^{\epsilon}$ for an arbitrarily small $\epsilon>0$ in the hybrid
extended network with the finite backhaul link rate
$R_{\textrm{BS}}$. In other words, choosing the best of the four
achievable schemes ISH, IMH, MH, and HC is order-optimal for all the
operating regimes (even if the rate of each backhaul link between a
BS and the RCP is finite).
Note that the operating regimes on the upper bound in (\ref
{Eq:UpperBound}) are basically the same as those on the achievable
throughput illustrated in Figs.
\ref{Fig:OperatingRegimeEta-1}--\ref{Fig:OperatingRegimeEta-4}. Let
us now examine how to achieve each term in (\ref{Eq:UpperBound}).
The first and second terms in the max operation of
(\ref{Eq:UpperBound}) correspond to the rate scaling achieved the
ISH and IMH protocols for the infinite-capacity backhaul link case
(i.e., $\eta\rightarrow\infty$), respectively. The third term
represents the total transmission rate through the backhaul links
between all BSs and one RCP when the rate of each link is limited by
$R_{\textrm{BS}}$. The last two terms in the max operation of
(\ref{Eq:UpperBound}) correspond to the rate scaling achieved by the
MH and HC schemes without infrastructure support.
\end{remark}

Now we would like to examine in detail the amount of information
transfer by each separated destination set.

\begin{remark}
As mentioned earlier, each term in (\ref{Eq:UpperBound}) is
associated with one of the destination sets for a given cut. To be
concrete, the first and the second terms in the max operation of
(\ref{Eq:UpperBound}) represent the amount of information
transferred to $D_{L_1}^{(3a)}$ and $D_{L_1}^{(2)}$ over the {\em
wireless} connections, respectively. The third term represents the
information flows over the {\em wired} connections from the BSs on
the left half to the BSs on the right half. The fourth and fifth
terms represent the amount of information transferred to
$D_{L_1}^{(1)}$ and $D_{L_1}^{(3b)}$ over the {\em wireless}
connections, respectively.
\end{remark}

In Remarks \ref{Rem:DoF-limitedRegimes} and
\ref{Rem:Infra-limitedRegimes}, we showed the case where our network
is either in the DoF- or infrastructure-limited regime according to
the achievable throughput scaling. In a similar fashion, the
identical regimes that lead to such fundamental limitations can also
be identified using the cut-set argument drawn under $L_1$ and $L_2$
in the following remarks.

\begin{remark}[DoF-limited regimes]
Similarly as in Remark~\ref{Rem:DoF-limitedRegimes} obtained based
on the achievability result, the DoF-limited regime can also be
scrutinized by using the cut-set bound result in
Theorem~\ref{Thm:TotalThroughputUpperBound}.
\end{remark}

\begin{remark}[Infrastructure-limited regimes]
In Remark~\ref{Rem:Infra-limitedRegimes} obtained based on the
achievability result, the infrastructure-limited regime can also be
scrutinized using the upper bound derived under the two cuts.
\end{remark}

\section{Conclusion}\label{SEC:Conclusion}
A generalized capacity scaling was characterized for an
infrastructure-supported extended network assuming an arbitrary rate
scaling of backhaul links. The minimum required rate of each
backhaul link was first derived to guarantee the optimal capacity
scaling along with a cost-effective backhaul solution. Provided
three scaling parameters (i.e., the number of BSs, $m$, and the
number of antennas at each BS, $l$, the rate of each backhaul link,
$R_{\textrm{BS}}$) scale at arbitrary rates relative to the number
of wireless nodes, $n$, a generalized achievable throughput scaling
was then derived based on using one of the two
infrastructure-supported routing protocols, ISH and IMH, and the two
ad hoc routing protocols, MH and HC. Three-dimensional operating
regimes were also explicitly identified according to the three
scaling parameters. In particular, we studied the case where our
network is fundamentally DoF- or infrastructure-limited, or
doubly-limited. Furthermore, to show the optimality of our
achievability result, a generalized information-theoretic cut-set
upper bound was derived using two cuts. In the hybrid network with
rate-limited infrastructure, it was shown that the upper bound
matches the achievable throughput scaling for all the
three-dimensional operating regimes.

\appendix
\section{Appendix}
\renewcommand\theequation{\Alph{section}.\arabic{equation}}
\setcounter{equation}{0} \subsection{Three-Dimensional Operating
Regimes on the Achievable Throughput Scaling}
\label{Appendix-ThroughputOperatingRegimes} Let $e_{\textrm{ISH}}$,
$e_{\textrm{IMH}}$, $e_{\textrm{MH}}$, and $e_{\textrm{HC}}$ denote
the scaling exponents for the total throughput achieved by using the
ISH, IMH, MH, and HC protocols, respectively. The scaling exponent
for the throughput achieved by the ISH and IMH protocols is given by
$\beta+\eta$ when the performance is limited by backhaul
transmission rate. In the following, the operating regimes with
respect to $\beta$ and $\gamma$ will be analyzed for each range of
the value of $\eta$ (the terms including $\epsilon$ are omitted for
notational convenience). For each case, we will check whether
$\max\{e_{\textrm{ISH}},e_{\textrm{IMH}}\}$ is given by
$\beta+\eta$. If so, it will be compared with
$\max\{e_{\textrm{MH}},e_{\textrm{HC}}\}$ to determine whether the
regime is infrastructure-limited or Regime A (i.e., no
infrastructure is needed). Otherwise, the regimes will be the same
as those for the network with infinite-capacity infrastructure.

\subsubsection{$\eta<-\frac{1}{2}$
(Fig.~\ref{Fig:OperatingRegimeEta-1})} \label{App:A} In this case,
the inequalities $\gamma>\eta$ and $\beta < 1-2\eta$ always hold.
These inequalities are equivalent to $\beta+\gamma>\beta+\eta$ and
$\frac{1+\beta}{2}>\beta+\eta$, respectively. Hence, it follows that
$e_{\textrm{IMH}}=\beta+\eta$ for all the operating regimes, thus
resulting in
$\max\{e_{\textrm{ISH}},e_{\textrm{IMH}}\}=\max\left\{\min\left\{1+\gamma-\frac{\alpha(1-\beta)}{2},\beta+\eta\right\},\beta+\eta\right\}=\beta+\eta$.
Since $\beta+\eta<\max\{e_{\textrm{MH}},e_{\textrm{HC}}\}$ from
$\beta+\eta<e_{\textrm{MH}}$,  the entire operating regimes are
Regime A as illustrated in Fig.~\ref{Fig:OperatingRegimeEta-1}.

\subsubsection{$-\frac{1}{2}\leq\eta <0$
(Fig.~\ref{Fig:OperatingRegimeEta-2})} As in Appendix~\ref{App:A}
($\eta<-\frac{1}{2}$), we have
$\max\{e_{\textrm{ISH}},e_{\textrm{IMH}}\}=\beta+\eta$ from
$\gamma>\eta$ and $\beta < 1-2\eta$.

\begin{itemize}
\item $\beta<\frac{1}{2}-\eta$: In this case, from $\beta+\eta<e_{\textrm{MH}}$, it follows that $\beta+\eta<\max\{e_{\textrm{MH}},e_{\textrm{HC}}\}$, and thus the associated operating regimes belong to Regime A.

\item $\beta\geq\frac{1}{2}-\eta$: In this case, it follows that $\beta+\eta\geq e_{\textrm{MH}}$.
Hence, the throughput scaling exponent of the best achievable scheme
is given by
\begin{align}
&\max\left\{\beta+\eta,e_{\textrm{MH}},e_{\textrm{HC}}\right\}=\max\left\{\beta+\eta,e_{\textrm{HC}}\right\}
\nonumber\\& =\max\left\{\beta+\eta,2-\frac{\alpha}{2}\right\},
\nonumber
\end{align}
and the associated regimes belong to Regime $\tilde{\textrm{B}}$.
\end{itemize}
The operating regimes for $-\frac{1}{2}\leq\eta <0$ are finally
illustrated in Fig. \ref{Fig:OperatingRegimeEta-2}.

\subsubsection{$0\leq\eta <\frac{1}{2}$ \label{App:C}
(Fig.~\ref{Fig:OperatingRegimeEta-3})} If
$\beta+\gamma<\frac{1}{2}$, then pure ad hoc transmissions without
BSs outperform the infrastructure-supported protocols. Thus, the
associated operating regimes are included in Regime $\textrm{A}$.
Now, let us focus on the other case, i.e.,
$\beta+\gamma\ge\frac{1}{2}$, in the following.

\begin{itemize}
\item $\gamma>\eta$ and $\beta<1-2\eta$: When $\beta+2\gamma<1$, it follows that $e_{\textrm{IMH}}=\min\{\beta+\gamma,\beta+\eta\}=\beta+\eta$ from $\gamma>\eta$.
When $\beta+2\gamma\geq 1$, it follows that
$e_{\textrm{IMH}}=\min\left\{\frac{1+\beta}{2},\beta+\eta\right\}=\beta+\eta$
because $\beta<1-2\eta$. Hence, we have
$\max\{e_{\textrm{ISH}},e_{\textrm{IMH}}\}=\beta+\eta$. If
$\beta<\frac{1}{2}-\eta$ holds, then the associated operating
regimes belong to Regime $\textrm{A}$ from the fact that
$\beta+\eta<e_{\textrm{MH}}$. Otherwise, it follows that
$\beta+\eta\geq e_{\textrm{MH}}$, and thus the throughput scaling
exponent of the best achievable scheme is given by
$\max\left\{\beta+\eta,2-\frac{\alpha}{2}\right\}$. In other words,
for $\beta\geq\frac{1}{2}-\eta$, the network is
infrastructure-limited as $\alpha$ increases beyond a certain value,
and the resulting operating regimes belong to Regime
$\tilde{\textrm{B}}$.

\item $\gamma<\eta$ and $\beta+2\gamma<1$: The throughput scaling exponent achieved by the two
infrastructure-supported protocols is given by
\begin{align}
&\max\{e_{\textrm{ISH}},e_{\textrm{IMH}}\}
\nonumber\\
&=\min\left\{\max\left\{1+\gamma-\frac{\alpha(1-\beta)}{2},\beta+\gamma\right\},\beta+\eta\right\}
\nonumber\\ &=\min\{\beta+\gamma,\beta+\eta\}=\beta+\gamma.
\nonumber
\end{align}
Hence, the operating regimes belong to Regime B.

\item $\beta+2\gamma\geq 1$, $\beta\geq 1-2\eta$, and $\gamma\geq \beta^2+(\eta-2)\beta+1$:
In this case, note that the performance based on the IMH protocol is
not limited by backhaul transmission rate since $\beta+\eta\ge
\frac{1+\beta}{2}$, or equivalently, $\beta\geq 1-2\eta$. Now, let
us deal with the ISH protocol. When both conditions
$e_{\textrm{ISH}}=\beta+\eta$ and $\beta+\eta\geq e_{\textrm{HC}}$
follow, the network is infrastructure-limited for some $\alpha$,
where $\beta+\eta$ is always greater than or equal to
$e_{\textrm{MH}}$ (i.e., $\beta\ge \frac{1}{2}-\eta$) in this case.
More specifically, we need the condition $4-2\beta-2\eta\leq \alpha<
2+\frac{2(\gamma-\eta)}{1-\beta}$ so that the above two conditions
hold (see Table~\ref{Tab:RateEta} for the details). It thus follows
that $\gamma\geq \beta^2+(\eta-2)\beta+1$ from $4-2\beta-2\eta\leq
2+\frac{2(\gamma-\eta)}{1-\beta}$. In consequence, the associated
operating regimes belong to Regime $\tilde{\textrm{D}}$.

\item $\beta+2\gamma\geq 1$, $\beta\geq 1-2\eta$, and $\gamma< \beta^2+(\eta-2)\beta+1$:
In this case, we again note that the performance based on the IMH
protocol is not limited by backhaul transmission rate since
$\beta\geq 1-2\eta$. Now, let us focus on the ISH protocol at the
path-loss attenuation regime $\alpha<
2+\frac{2(\gamma-\eta)}{1-\beta}$ (see Table~\ref{Tab:RateEta}).
Then, the throughput scaling exponent $\beta+\eta$ achieved by the
ISH protocol is less than $e_{\textrm{HC}}=2-\frac{\alpha}{2}$ since
$\alpha<4-2\beta-2\eta$ due to the fact that
$2+\frac{2(\gamma-\eta)}{1-\beta} < 4-2\beta-2\eta $. The associated
operating regimes are thus the same as those for the network with
infinite-capacity infrastructure depicted in
Fig.~\ref{Fig:OperatingRegimesInfiniteBScapacity}.
\end{itemize}
Finally, the operating regimes for $0\leq\eta<\frac{1}{2}$ are
illustrated in Fig.~\ref{Fig:OperatingRegimeEta-3}.

\subsubsection{$\frac{1}{2}\leq\eta<1$
(Fig.~\ref{Fig:OperatingRegimeEta-4})} \label{App:D} When
$\beta+2\gamma<1$, it follows that $e_{\textrm{IMH}}=\beta+\gamma$,
which is less than $\beta+\eta$. When $\beta+2\gamma\geq 1$, we have
$e_{\textrm{IMH}}=\frac{1+\beta}{2}$, which is less than
$\beta+\eta$ since $\beta>1-2\eta$. For this reason, the performance
based on the IMH protocol is not limited by backhaul transmission
rate for all the operating regimes with respect to $\beta$ and
$\gamma$. For the ISH protocol, following the same line as the third
bullet of Appendix~\ref{App:C} ($0\leq\eta <\frac{1}{2}$), we obtain
the condition $\gamma\geq \beta^2+(\eta-2)\beta+1$, which is
depicted in Regime $\tilde{\textrm{D}}$. Other regimes are not
infrastructure-limited and is the same as those for the network with
infinite-capacity infrastructure. The operating regimes for
$\frac{1}{2}\leq\eta<1$ are finally illustrated in Fig.
\ref{Fig:OperatingRegimeEta-4}.

\subsubsection{$\eta\geq 1$
(Fig.~\ref{Fig:OperatingRegimesInfiniteBScapacity})} As in
Appendix~\ref{App:D} ($\frac{1}{2}\le \eta<1$), the network using
the IMH protocol is not limited by the backhaul transmission, i.e.,
$e_{\textrm{IMH}}<\beta+\eta$. Furthermore, in this case, there is
no operating regime such that both conditions $\gamma\geq
\beta^2+(\eta-2)\beta+1$ and $\beta+\gamma\leq 1$ hold. Hence, the
entire regimes are not infrastructure-limited, and thus the
resulting operating regimes are exactly the same as those for the
network with infinite-capacity infrastructure.



\begin{thebibliography}{10}
\providecommand{\url}[1]{#1}
\def\UrlFont{\rmfamily}
\providecommand{\newblock}{\relax} \providecommand{\bibinfo}[2]{#2}
\providecommand\BIBentrySTDinterwordspacing{\spaceskip=0pt\relax}
\providecommand\BIBentryALTinterwordstretchfactor{4}
\providecommand\BIBentryALTinterwordspacing{\spaceskip=\fontdimen2\font
plus \BIBentryALTinterwordstretchfactor\fontdimen3\font minus
  \fontdimen4\font\relax}
\providecommand\BIBforeignlanguage[2]{{%
\expandafter\ifx\csname l@#1\endcsname\relax
\typeout{** WARNING: IEEEtran.bst: No hyphenation pattern has been}%
\typeout{** loaded for the language `#1'. Using the pattern for}%
\typeout{** the default language instead.}%
\else \language=\csname l@#1\endcsname \fi #2}}

\bibitem{GuptaKumar:00}
P.~Gupta and P.~R. Kumar, ``The capacity of wireless networks,''
\emph{{IEEE}
  Trans. Inf. Theory}, vol.~46, no.~2, pp. 388--404, Mar. 2000.

\bibitem{D.Knuth:76}
D.~E. Knuth, ``Big omicron and big omega and big theta,'' \emph{ACM
SIGACT
  News}, vol.~8, no.~2, pp. 18--24, Apr.-Jun. 1976.

\bibitem{FranceschettiDouseTseThiran:07}
M.~Franceschetti, O.~Dousse, D.~N.~C. Tse, and P.~Thiran, ``Closing
the gap in
  the capacity of wireless networks via percolation theory,'' \emph{{IEEE}
  Trans. Inf. Theory}, vol.~53, no.~3, pp. 1009--1018, Mar. 2007.

\bibitem{ShinChungLee:TIT13}
W.-Y. Shin, S.-Y. Chung, and Y.~H. Lee, ``Parallel opportunistic
routing in
  wireless networks,'' \emph{{IEEE} Trans. Inf. Theory}, vol.~59, no.~10, pp.
  6290--6300, Oct. 2013.

\bibitem{ElGamalMammenPrabhakarShah:06}
A.~{El Gamal}, J.~Mammen, B.~Prabhakar, and D.~Shah, ``Optimal
throughput-delay
  scaling in wireless networks--{P}art {I}: {T}he fluid model,'' \emph{{IEEE}
  Trans. Inf. Theory}, vol.~52, no.~6, pp. 2568--2592, June 2006.

\bibitem{OzgurLevequeTse:07}
A.~{\"O}zg{\"u}r, O.~L{\'e}v{\^e}que, and D.~N.~C. Tse,
``Hierarchical
  cooperation achieves optimal capacity scaling in ad hoc networks,''
  \emph{{IEEE} Trans. Inf. Theory}, vol.~53, no.~10, pp. 3549--3572, Oct. 2007.

\bibitem{NiesenGuptaShah:10}
U.~Niesen, P.~Gupta, and D.~Shah, ``The balanced unicast and
multicast capacity
  regions of large wireless networks,'' \emph{{IEEE} Trans. Inf. Theory},
  vol.~56, no.~5, pp. 2249--2271, May 2010.

\bibitem{GrossglauserTse:02}
M.~Grossglauser and D.~N.~C. Tse, ``Mobility increases the capacity
of ad hoc
  wireless networks,'' \emph{{IEEE/ACM} Trans. Networking}, vol.~10, no.~4, pp.
  477--486, Aug. 2002.

\bibitem{CadambeJafar:08}
V.~R. Cadambe and S.~A. Jafar, ``Interference alignment and degrees
of freedom
  of the {$K$}-user interference channel,'' \emph{{IEEE} Trans. Inf. Theory},
  vol.~54, no.~8, pp. 3425--3441, Aug. 2008.

\bibitem{Niesen:IT11}
U.~Niesen, ``Interference alignment in dense wireless networks,''
\emph{{IEEE}
  Trans. Inf. Theory}, vol.~57, no.~5, pp. 2889--2901, May 2011.

\bibitem{LiZhangFang:TMC11}
P.~Li, C.~Zhang, and Y.~Fang, ``The capacity of wireless ad hoc
networks using
  directional antennas,'' vol.~10, no.~10, pp. 1374--1387, Oct. 2011.

\bibitem{ZemlianovVeciana:05}
A.~Zemlianov and G.~de~Veciana, ``Capacity of ad hoc wireless
networks with
  infrastructure support,'' \emph{{IEEE} J. Select. Areas Commun.}, vol.~23,
  no.~3, pp. 657--667, Mar. 2005.

\bibitem{GarettoGiacconeLeonardi:TN09-1}
M.~Garetto, P.~Giaccone, and E.~Leonardi, ``Capacity scaling in ad
hoc networks
  with heterogeneous mobile nodes: The super-critical regime,''
  \emph{{IEEE/ACM} Trans. Networking}, vol.~17, no.~5, pp. 1522--1535, Oct.
  2009.

\bibitem{GarettoGiacconeLeonardi:TN09-2}
------, ``Capacity scaling in ad hoc networks with heterogeneous mobile nodes:
  The subcritical regime,'' \emph{{IEEE/ACM} Trans. Networking}, vol.~17,
  no.~6, pp. 1888--1901, Dec. 2009.

\bibitem{AlfanoGarettoLeonardiMartina:TN10}
G.~Alfano, M.~Garetto, E.~Leonardi, and V.~Martina, ``Capacity
scaling of
  wireless networks with inhomogeneous node density: Lower bounds,''
  \emph{{IEEE/ACM} Trans. Networking}, vol.~18, no.~5, pp. 1624--1636, Oct.
  2010.

\bibitem{YinGaoCui:TN10}
C.~Yin, L.~Gao, and S.~Cui, ``Scaling laws for overlaid wireless
networks: A
  cognitive radio network versus a primary network,'' \emph{{IEEE/ACM} Trans.
  Networking}, vol.~18, no.~4, pp. 1317--1329, Aug. 2010.

\bibitem{HuangWang:TN12}
W.~Huang and X.~Wang, ``Capacity scaling of general cognitive
networks,''
  \emph{{IEEE/ACM} Trans. Networking}, vol.~20, no.~5, pp. 1501--1513, Oct.
  2012.

\bibitem{O.Dousse:INFOCOM02}
O.~Dousse, P.~Thiran, and M.~Hasler, ``Connectivity in ad-hoc and
hybrid
  networks,'' in \emph{Proc. {IEEE} {INFOCOM}}, New York, NY, June 2002, pp.
  1079--1088.

\bibitem{KozatTassiulas:03}
U.~C. Kozat and L.~Tassiulas, ``Throughput capacity of random ad hoc
networks
  with infrastructure support,'' in \emph{Proc. {ACM} {MobiCom}}, San Diego,
  CA, Sept. 2003, pp. 55--65.

\bibitem{LiuThiranTowsley:07}
B.~Liu, P.~Thiran, and D.~Towsley, ``Capacity of a wireless ad hoc
network with
  infrastructure,'' in \emph{Proc. {ACM} {M}obi{H}oc}, Montr{\'e}al, Canada,
  Sept. 2007, pp. 239--246.

\bibitem{ShinJeonDevroyeVuChungLeeTarokh:08}
W.-Y. Shin, S.-W. Jeon, N.~Devroye, M.~H. Vu, S.-Y. Chung, Y.~H.
Lee, and
  V.~Tarokh, ``Improved capacity scaling in wireless networks with
  infrastructure,'' \emph{{IEEE} Trans. Inf. Theory}, vol.~57, no.~8, pp.
  5088--5102, Aug. 2011.

\bibitem{GuthyUtschickHonig:JSAC13}
C.~Guthy, W.~Utschick, and M.~L. Honig, ``Large system analysis of
sum capacity
  in the {G}aussian {MIMO} broadcast channel,'' \emph{{IEEE} J. Select. Areas
  Commun.}, vol.~31, no.~2, pp. 149--159, Feb. 2013.

\bibitem{C.Capar:11}
\c{C}. \c{C}apar, D.~Goeckel, D.~Towsley, R.~Gibbens, and A.~Swami,
``Cut
  results for the capacity of hybrid networks,'' in \emph{Proc. Annual Conf.
  Int. Technol. Alliance {(ACITA)}}, Adelphi, MD, Sept. 2011, pp. 1--2.

\bibitem{C.Capar:12}
------, ``Capacity of hybrid networks,'' in \emph{Proc. Annual Conf. Int.
  Technol. Alliance {(ACITA)}}, Southampton, UK, Sept. 2012, pp. 1--8.

\bibitem{JeongShin:ISIT13}
C.~Jeong and W.-Y. Shin, ``Large-scale ad hoc networks with
rate-limited
  infrastructure: Information-theoretic operating regimes,'' in \emph{Proc.
  IEEE Int. Symp. Inf. Theory {(ISIT)}}, Istanbul, Turkey, July 2013, pp.
  424--428.

\bibitem{S.Sesia:11}
S.~Sesia, I.~Toufik, and M.~Baker, \emph{LTE - The UMTS Long Term
Evolution:
  From Theory to Practice}.\hskip 1em plus 0.5em minus 0.4em\relax UK: Wiley,
  2011.

\bibitem{A.Sanderovich:ISIT07}
A.~Sanderovich, O.~Somekh, and S.~Shamai, ``Uplink macro diversity
with limited
  backhaul capacity,'' in \emph{Proc. ISIT}, Nice, France, June 2007, pp.
  11--15.

\bibitem{P.Marsch:ICC07}
P.~Marsch and G.~Fettweis, ``A framework for optimizing the uplink
performance
  of distributed antenna systems under a constrained backhaul,'' in \emph{Proc.
  IEEE Int. Conf. Communications (ICC'07)}, Glasgow, Scotland, June 2007, pp.
  975--979.

\bibitem{S.Shamai:PIMRC08}
S.~Shamai, O.~Simeone, O.~Somekh, and A.~Sanderovich,
``Information-theoretic
  implications of constrained cooperation in simple cellular models,'' in
  \emph{Proc. PIMRC}, Cannes, France, Sept. 2008, pp. 1--5.

\bibitem{B.Nazer:ISIT09}
B.~Nazer, A.~Sanderovich, M.~Gastpar, and S.~Shamai, ``Structured
superposition
  for backhaul constrained cellular uplink,'' in \emph{Proc. IEEE Int. Symp.
  Information Theory (ISIT'09)}, Seoul, Korea, June-July 2009, pp. 1530--1534.

\bibitem{A.Sanderovich:TIT09}
A.~Sanderovich, O.~Somekh, H.~V. Poor, and S.~Shamai, ``Uplink macro
diversity
  of limited backhaul cellular network,'' \emph{{IEEE} Trans. Inf. Theory},
  vol.~55, no.~8, pp. 3457--3478, Aug. 2009.

\bibitem{ParkSimeoneSahinShamai:TSP13}
S.-H. Park, O.~Simeone, O.~Sahin, and S.~Shamai, ``Joint precoding
and
  multivariate backhaul compression for the downlink of cloud radio access
  networks,'' \emph{{IEEE} Trans. Signal Processing}, vol.~61, no.~22, pp.
  5646--5658, Nov. 2013.

\bibitem{OzgurJohariTseLeveque:10}
A.~{\"O}zg{\"u}r, R.~Johari, D.~N.~C. Tse, and O.~L{\'e}v{\^e}que,
  ``Information-theoretic operating regimes of large wireless networks,''
  \emph{{IEEE} Trans. Inf. Theory}, vol.~56, no.~1, pp. 427--437, Jan. 2010.

\bibitem{ViswanathTse:03}
P.~Viswanath and D.~N.~C. Tse, ``Sum capacity of the vector
{G}aussian
  broadcast channel and uplink-downlk duality,'' \emph{{IEEE} Trans. Inf.
  Theory}, vol.~49, pp. 1912--1921, Aug. 2003.

\bibitem{GomezRanganErkip:ISIT14}
F.~Gomez-Cuba, S.~Rangan, and E.~Erkip, ``Scaling laws for
infrastructure
  single and multihop wireless networks in wideband regimes,'' in \emph{Proc.
  IEEE Int. Symp. Inf. Theory (ISIT)}, Honolulu, HI, Jun./Jul. 2014, pp.
  76--80.

\bibitem{ConstantinescuScharf:98}
F.~Constantinescu and G.~Scharf, ``Generalized {G}ram-{H}adamard
inequality,''
  \emph{Journal of Inequalities and Applications}, vol.~2, pp. 381--386, 1998.

\bibitem{JovicicViswanathKulkarni:04}
A.~Jovicic, P.~Viswanath, and S.~R. Kulkarni, ``Upper bounds to
transport
  capacity of wireless networks,'' \emph{{IEEE} Trans. Inf. Theory}, vol.~50,
  no.~11, pp. 2555--2565, Nov. 2004.

\end{thebibliography}
\end{document}